\providecommand{\customgenericname}{}
\newcommand{\newcustomtheorem}[2]{%
  \newenvironment{#1}[1]
  {%
   \renewcommand\customgenericname{#2}%
   \renewcommand\theinnercustomgeneric{##1}%
   \innercustomgeneric
  }
  {\endinnercustomgeneric}
}
\theoremstyle{plain}
\newtheorem{definition}{Definition}
\newtheorem{thm}{Theorem}
\newtheorem{lemma}{Lemma}
\newcommand{\pluseq}{\mathrel{+}=}
\newcommand{\notni}{\not\mathrel{\text{\reflectbox{$\in$}}}}
\def\prob{{\mathbb P}}
\def\1{{\mathbf{1}}}
\def\ub{\textnormal{UB}}
\def\lb{\textnormal{LB}}
\def\nbub{\textnormal{NB-UB}}
\def\nblb{\textnormal{NB-LB}}
\def\inprob{\texttt{ProcessIncomingMsg}_\texttt{UB}}
\def\outprob{\texttt{GenerateOutgoingMsg}_\texttt{UB}}
\def\lowprob{\texttt{ProcessIncomingMsg}_\texttt{LB}}
\def\upinf{\sigma^+}
\def\lowinf{\sigma^-}
\def\to{\!\rightarrow\!}
\def\lowin{\textnormal{M}}
\def\incur{\textnormal{M}_{\textnormal{curr}}}
\def\innext{\textnormal{M}_{\textnormal{next}}}
\def\vcur{\textnormal{MSrc}}
\def\minsert{\textnormal{insert}}
\title{Nonbacktracking Bounds on the Influence in Independent Cascade Models}
\begin{document}

\author{
Emmanuel Abbe\thanks{Program in Applied and Computational Mathematics, and the Department of Electrical Engineering, Princeton University, Princeton,
USA, \texttt{eabbe@princeton.edu.}}
\and 
Sanjeev Kulkarni\thanks{The Department of Electrical Engineering, Princeton University, Princeton,
USA, \texttt{kulkarni@princeton.edu.}}
\and 
Eun Jee Lee\thanks{Program in Applied and Computational Mathematics, Princeton University, Princeton,
USA, \texttt{ejlee@princeton.edu.}}
}

\maketitle
\begin{abstract}
This paper develops upper and lower bounds on the influence measure in a network, more precisely, the expected number of nodes that a seed set can influence in the independent cascade model. In particular, our bounds exploit nonbacktracking walks, Fortuin--Kasteleyn--Ginibre (FKG) type inequalities, and are computed by message passing implementation. Nonbacktracking walks have recently allowed for headways in community detection, and this paper shows that their use can also impact the influence computation. Further, we provide a knob to control the trade-off between the efficiency and the accuracy of the bounds. Finally, the tightness of the bounds is illustrated with simulations on various network models.
\end{abstract}

\section{Introduction}
Influence propagation is concerned with the diffusion of information (or viruses) from initially influenced (or infected) nodes, called \emph{seeds}, in a network. 
Understanding how information propagates in networks has become a central problem in a broad range of fields, such as 
viral marketing \cite{leskovec2007dynamics}, sociology \cite{granovetter1978threshold,lopez2008social,watts2002simple}, communication \cite{khelil2002epidemic}, epidemiology \cite{shulgin1998pulse}, and social network analysis \cite{yang2010predicting}.

One of the most fundamental questions on influence propagation is to estimate the \emph{influence}, i.e. the expected number of influenced nodes at the end of the propagation given some seeds.
Estimating the influence is central to various research problems related to influence propagation, such as the widely-known influence maximization problem --- finding a set of $k$ nodes that maximizes the expected number of influenced nodes. 

Recent studies in the influence propagation have proposed heuristic algorithms \cite{kempe2003maximizing,leskovec2007cost,chen2009efficient,goyal2011celf++,wang2012scalable} for the influence maximization problem while using Monte Carlo (MC) simulations to approximate the influence.
Despite its simplicity, approximating the influence via MC simulations is far from ideal for large networks; in particular, MC may require a large amount of computations in order to stabilize the approximation.

To overcome the limitations of Monte Carlo simulations, 
many researchers have been taking theoretical approaches
to approximate the influence of given seeds in a network. 
Draief et al.,~\cite{draief2006thresholds} introduced an upper bound for the influence by using the spectral radius of the adjacency matrix. Tighter upper bounds were later suggested in~\cite{lemonnier2014tight} which relate the ratio of influenced nodes in a network to the spectral radius of the so-called \emph{Hazard} matrix.
Further, improved upper bounds which account for \emph{sensitive} edges were introduced in~\cite{lee2016spectral}.

In contrast, there has been little work on finding a tight lower bound for the influence. 
A few exceptions include the work by Khim et al.~\cite{khim2016computing}, where the lower bound is obtained by only considering the infections through the \emph{maximal-weighted} paths. 

In this paper, we propose both upper and lower bounds on the influence using nonbacktracking walks and Fortuin--Kasteleyn--Ginibre (FKG) type inequalities. 
The bounds can be efficiently obtained by message passing implementation. This shows that nonbacktracking walks can also impact influence propagation, making another case for the use of nonbacktracking walks in graphical model problems as in~\cite{krzakala2013spectral, karrer2014percolation, bordenave2015non, abbe2015detection}, discussed later in the paper. Further, we provide a parametrized version of the bounds that can adjust the trade-off between the efficiency and the accuracy of the bounds.

\section{Background}
We introduce here the independent cascade model and provide background for the main results.  

\begin{definition}[Independent Cascade Model]
Consider a directed graph $G=(V,\vec{E})$ with $|V|=n$, a transmission probability matrix ${\cal P}\in [0,1]^{n\times n}$, and a seed set $S_0\subseteq V$.
For all $u\in V$, let $N(u)$ be the set of neighbors of node $u$.
The independent cascade model $IC(G,{\cal P},S_0)$ sequentially generates the infected set $S_t\subseteq V$ for each discrete time $t\geq 1$ as follows.
At time $t$, $S_t$ is initialized to be an empty set. 
Then, each node $u\in S_{t-1}$ attempts to infect $v\in N(u)\cap (V\setminus \cup_{i=0}^{t-1}S_{i})$ with probability ${\cal P}_{uv}$, i.e. node $u$ infects its uninfected neighbor $v$ with probability ${\cal P}_{uv}$.
If $v$ is infected at time $t$, add $v$ to $S_t$. 
The process stops at $T$ if $S_T=\emptyset$ at the end of the step $t=T$.
The set of the infected nodes at the end of propagation is defined as $S=\cup_{i=0}^{T-1}S_t$.
\end{definition}

We often refer as an edge $(u,v)$ being \emph{open} if node $u$ infects node $v$. 
The IC model can also be defined on an undirected graph by replacing each edge $e=(u,v)$ which has the transmission probability $p_e$ with two directed edges $\overrightarrow{(u,v)}, \overrightarrow{(v,u)}$ and assigning transmission probabilities ${\cal P}_{uv}={\cal P}_{vu}=p_e$.

The IC model is equivalent to the live-arc graph model, where the infection happens at once, rather than sequentially. The live-arc graph model first decides the state of every edge with a Bernoulli trial, i.e. edge $(u,v)$ is open independently with probability ${\cal P}_{uv}$ and closed, otherwise. Then, the set of infected nodes is defined as the nodes that are connected to at least one of the seeds by the open edges. 

\begin{definition}[Influence]
The expected number of nodes that are infected at the end of the propagation process is called the \emph{influence} (rather than the expected influence, with a slight abuse of terminology) of $IC(G,{\cal P},S_0)$, and is defined as
\vspace{-0.2em}
\begin{eqnarray}
\sigma(S_0)&=&\sum_{v\in V}\prob(v \text{ is infected}).
\end{eqnarray}
\end{definition}
\vspace{-0.5em}

It is shown in~\cite{chen2010scalable} that computing the influence $\sigma(S_0)$ in the independent cascade model $IC(G,{\cal P},S_0)$ is $\#$P-hard, even with a single seed, i.e. $|S_0|=1$. 

Next, we define nonbacktracking (NB) walks on a directed graph.
Nonbacktracking walks have already been used for studying the characteristics of networks.
To the best of our knowledge, the use of NB walks in the context of epidemics was first introduced in the paper of Karrer et al.~\cite{karrer2010message} and later applied to percolation in~\cite{karrer2014percolation}. 
In particular, Karrer et al. reformulate the spread of infections as a message passing process and demonstrate how the resulting equations can be used to calculate the upper bound on the number of nodes that are susceptible at a given time. 
As we shall see, we take a different approach to the use of the NB walks, which focuses on the effective contribution of a node in infecting another node and accumulates such contributions to obtain upper and lower bounds.  
More recently, nonbacktracking walks are used for community detection~\cite{krzakala2013spectral, bordenave2015non, abbe2015detection}.

\begin{definition}[Nonbacktracking Walk]
Let $G=(V,E)$ be a directed graph. A nonbacktracking walk of length $k$ is defined as
$w^{(k)}=(v_0,v_1,\dots,v_k)$,
where $v_i\in V$ and $(v_{i-1},v_{i})\in E$ for all $i\in [k]$, and
$v_{i-1}\neq v_{i+1}$ for $i\in [k-1]$.
\end{definition}

We next recall a key inequality introduced by Fortuin et. al~\cite{fortuin1971correlation}. 
\begin{thm}[FKG Inequality]
Let $\Gamma$ be a finite partially ordered set, ordered by $\prec$ with ${(\Gamma, \prec)}$ a distributive lattice and $\mu$ be a positive measure on $\Gamma$ satisfying the following condition: for all $x, y\in \Gamma$,
\begin{eqnarray}
\mu(x\wedge y)\mu(x\vee y)&\geq&\mu(x)\mu(y),\nonumber
\end{eqnarray}
where $x\wedge y=\max\{z\in \Gamma:z\preceq x, z\preceq y\}$  and ${x\vee y=\min\{z\in \Gamma:y\preceq z, y\preceq z\}}$.
Let $f$ and $g$ be both increasing (or both decreasing) functions on $\Gamma$. Then,
\begin{eqnarray}
(\sum_{x\in \Gamma}\mu(x))(\sum_{x\in \Gamma}f(x)g(x)\mu(x))&\geq&(\sum_{x\in \Gamma}f(x)\mu(x))(\sum_{x\in \Gamma}g(x)\mu(x)).
\end{eqnarray}
\end{thm}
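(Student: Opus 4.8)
The inequality is homogeneous of degree two in $\mu$, so I would first normalize $\sum_{x\in\Gamma}\mu(x)=1$ and restate the claim as the positive-correlation statement $\langle fg\rangle\ge\langle f\rangle\langle g\rangle$, where $\langle h\rangle:=\sum_{x\in\Gamma}h(x)\mu(x)$. The plan is then to reduce a general finite distributive lattice to a product of finite chains and to induct on the number of factors. By Birkhoff's representation theorem every finite distributive lattice embeds as a sublattice of a Boolean cube $\{0,1\}^N$ (equivalently a product of chains); I would transport $\mu$, $f$, and $g$ to the cube by extending $\mu$ by zero off $\Gamma$ and extending $f,g$ to monotone functions (e.g.\ $\tilde f(\omega)=\max\{f(z):z\in\Gamma,\ z\preceq\omega\}$). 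Because $\Gamma$ is closed under the cube's meet and join, the hypothesis $\mu(x\wedge y)\mu(x\vee y)\ge\mu(x)\mu(y)$ survives (pairs leaving $\Gamma$ make the right-hand side vanish), and $\tilde f,\tilde g$ stay increasing, so it suffices to prove the statement on $\Gamma=L_1\times\cdots\times L_N$ with each $L_i$ a chain.

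For the base case $N=1$, $\Gamma$ is totally ordered, the lattice condition is automatic (meet and join are min and max), and the inequality is exactly Chebyshev's sum inequality: since $f,g$ are monotone in the same direction, $(f(x)-f(y))(g(x)-g(y))\ge0$ for all $x,y$, and summing this against $\mu(x)\mu(y)\ge0$ and expanding gives $2\big(\langle fg\rangle-\langle f\rangle\langle g\rangle\big)\ge0$. The both-decreasing case is identical, since the product of differences is again nonnegative.

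For the inductive step I would condition on the last coordinate. Writing $x=(x',a)$ with $a\in L_N$, let $\nu_a$ denote the conditional law of $x'$ given $x_N=a$, write $\langle\cdot\mid a\rangle$ for the average under $\nu_a$, and let $\mu_N$ be the marginal of $x_N$. The argument rests on two claims. First, for each fixed $a$ the conditional measure $\nu_a$ is again log-supermodular; applying the hypothesis to the pair $(x',a),(y',a)$ gives $\mu(x'\vee y',a)\mu(x'\wedge y',a)\ge\mu(x',a)\mu(y',a)$, so the induction hypothesis yields $\langle fg\mid a\rangle\ge\langle f\mid a\rangle\langle g\mid a\rangle$. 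Second, the maps $a\mapsto\langle f\mid a\rangle$ and $a\mapsto\langle g\mid a\rangle$ are increasing. Granting both, the tower property gives $\langle fg\rangle=\sum_a\langle fg\mid a\rangle\,\mu_N(a)\ge\sum_a\langle f\mid a\rangle\langle g\mid a\rangle\,\mu_N(a)$, and a final application of the one-dimensional (base) case to the two increasing functions of $a$ produces $\ge\langle f\rangle\langle g\rangle$, completing the induction.

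The \emph{main obstacle} is the monotonicity of the conditional expectations, which is where the full strength of the log-supermodularity hypothesis enters. I would prove it through stochastic domination: for $a\le a'$ I claim $\nu_{a'}$ dominates $\nu_a$, which follows from a monotone-likelihood-ratio inequality obtained by applying the hypothesis to the cross pair $(x',a')$ and $(y',a)$, whose join is $(x'\vee y',a')$ and whose meet is $(x'\wedge y',a)$; this gives $\mu(x'\vee y',a')\mu(x'\wedge y',a)\ge\mu(x',a')\mu(y',a)$. Combining domination with the fact that $x'\mapsto f(x',a)$ is increasing, and with the monotonicity of $f$ in its last coordinate, yields $\langle f\mid a'\rangle\ge\langle f\mid a\rangle$. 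Verifying these two correlation inequalities cleanly, and checking that the zero-extension does not break them, is the technical heart; the remaining steps are bookkeeping.
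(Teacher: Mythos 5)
The paper does not prove this theorem: it is recalled from Fortuin, Kasteleyn and Ginibre \cite{fortuin1971correlation} with a citation only, so there is no in-paper argument to compare yours against. Judged on its own merits, your outline follows the classical route --- Birkhoff embedding of the finite distributive lattice as a sublattice of $\{0,1\}^N$, Chebyshev's sum inequality on a single chain as the base case, and conditioning on one coordinate for the inductive step --- and the reductions you describe are sound: the zero-extension preserves the log-supermodularity hypothesis precisely because the image of $\Gamma$ is closed under the cube's meet and join, the monotone extensions of $f$ and $g$ agree with $f,g$ on the support of $\mu$, and the tower-property bookkeeping at the end is correct.

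The gap is the step you yourself flag as the ``technical heart.'' Passing from the cross-pair inequality $\mu(x'\vee y',a')\,\mu(x'\wedge y',a)\ge\mu(x',a')\,\mu(y',a)$ to stochastic domination of $\nu_a$ by $\nu_{a'}$ is exactly Holley's inequality; when more than one coordinate remains this is not a monotone-likelihood-ratio observation (it reduces to one only in the one-dimensional case) and is essentially as deep as the FKG inequality itself, so as written the argument defers its hardest content to an unproved sub-lemma. A complete proof must either run a parallel induction that establishes Holley alongside FKG, or prove the stronger Ahlswede--Daykin four-functions theorem, whose base case on $\{0,1\}$ is an elementary inequality for four nonnegative reals and whose inductive step is mechanical, and deduce both. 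Two smaller points belong in the same place: conditional measures $\nu_a$ with $\mu_N(a)=0$ must be excluded or handled by convention, and since your extension of $\mu$ vanishes off $\Gamma$ you cannot assume strict positivity anywhere in the induction --- note that the naive additive perturbation $\mu+\varepsilon$ does not preserve log-supermodularity, so zeros have to be treated directly (the four-functions formulation, stated for arbitrary nonnegative functions, does this painlessly). With that sub-lemma supplied, the proof closes.
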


FKG inequality is instrumental in studying the influence propagation since the probability that a node is influenced is nondecreasing with respect to the partial order of random variables describing the states, open or closed, of the edges.

\section{Nonbacktracking bounds on the influence}
In this section, we presents upper and lower bounds on the influence in the independent cascade model and explain the motivations and intuitions of the bounds. 
The bounds are computed efficiently by algorithms which utilize nonbacktracking walks and FKG inequalities. 
In particular, the upper bound on a network based on a graph $G(V,E)$ runs in $O(|V|^2+|V||E|)$ and the lower bound runs in $O(|V|+|E|)$, whereas Monte Carlo simulation would require $O(|V|^3+|V|^2|E|)$ computations without knowing the variance of the influence, which is harder to estimate than the influence.
The reason for the large computational complexity of MC is that in order to ensure that the standard error of the estimation does not grow with respect to $|V|$, MC requires $O(|V|^2)$ computations. 
Hence, for large networks, where MC may not be feasible, our algorithms can still provide bounds on the influence. 

Furthermore, from our computable upper $\upinf$ and lower bounds $\lowinf$, we can compute an upper bound on the variance given by $(\upinf-\lowinf)^2/4$.
This could be used to estimate the number of computations needed by MC.
Computing the upper bound on the variance with the proposed bounds can be done in $O(|V|^2+|V||E|)$, whereas computing the variance with MC simulation requires $O(|V|^5+|V|^4|E|)$.

\subsection{Nonbacktracking upper bounds (\nbub)}
We start by defining the following for the independent cascade model $IC(G,{\cal P},S_0)$, where $G=(V,E)$ and $|V|=n$.

\begin{definition}
For any $v\in V$, we define the set of in-neighbors $N^-(v)=\{u\in V: (u,v)\in E\}$ and the set of out-neighbors $N^+(v)=\{u\in V: (v,u)\in E\}$.
\end{definition}

\begin{definition}
For any $v\in V$ and $l\in \{0,\dots,n-1\}$, the set ${P_l(S_0\to v)}$ is defined as the set of all paths with length $l$ from any seed $s\in S_0$ to $v$. We call a path $P$ is \emph{open} if every edge in $P$ is open. 
\end{definition}

\begin{definition} For any $u,v\in V$,\: $l\in \{0,\dots,n\!-\!1\}$,\: and $S\subseteq V$,\: we define the events
\begin{eqnarray}
A(v)&=&\{v \text{ is infected}\}\\
A_l(v)&=&\cup_{P\in P_l(S_0\to v)}\{P\text{ is open}\}\\
A_l(u\to v)&=&\cup_{P\in P_l(S_0\to u), P\notni v}\{P\text{ is open and edge }(u,v)\text{ is open}\}\\
A_{l,S}(v)&=&\cup_{P\in \{P'\in P_l(S_0\to v):P'\notni w, \forall w\in S\}}\{P\text{ is open}\},
\end{eqnarray}
and the probabilities
\begin{eqnarray}
p(v)&=&\prob(A(v))\\
p_l(v)&=&\prob(A_l(v))\\
p_l(u\to v)&=&\prob(A_l(u\to v)).
\end{eqnarray}
\end{definition}
\noindent In other words, $A_l(v)$ is the event that node $v$ is infected by open paths of length $l$, $A_l(u\to v)$ is the event that $v$ is infected by node $u$ with open paths of length $l+1$, i.e. there exists an open path of length $l+1$ from a seed to $v$ that ends with edge $(u,v)$, and $A_{l,S}(v)$ is the event that node $v$ is infected by length $l$ open paths which do not include any node in $S$.

\begin{lemma}\label{lemma1} For any $v\in V$,
\par\vspace{-2.2em}
\begin{eqnarray}
p(v)&\leq&1-\prod_{l=0}^{n-1}(1-p_l(v)).
\end{eqnarray}
For any $v\in V$ and $l\in\{0,\dots,n-1\}$,
\par\vspace{-0.5em}
\begin{eqnarray}
p_l(v)&\leq&1-\prod_{u\in N^-(v)}(1-p_l(u\to v)).
\end{eqnarray}
\end{lemma}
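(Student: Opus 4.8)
The plan is to prove both inequalities with a single template: express the relevant infection event as a \emph{union} of monotone (increasing) events in the states of the edges, and then control the probability of that union by applying the FKG inequality to the complementary \emph{decreasing} events. Throughout I would work on the live-arc probability space $\Omega=\{0,1\}^{E}$, where coordinate $e$ records whether edge $e$ is open, equipped with the product Bernoulli measure $\mu$ induced by ${\cal P}$. Ordered coordinatewise, $\Omega$ is a finite distributive lattice, and since $\mu$ is a product measure it satisfies $\mu(x\wedge y)\mu(x\vee y)=\mu(x)\mu(y)$ with equality, so the hypothesis of the FKG inequality holds. The only monotonicity input needed is that every event of the form ``a fixed path is open'' is increasing: opening additional edges never closes an already-open path. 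Hence each $A_l(v)$ and each $A_l(u\to v)$, being a union of such events, is increasing, and its complement is decreasing.

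For the first inequality, the key identity is $A(v)=\bigcup_{l=0}^{n-1}A_l(v)$. Indeed, $v$ is infected exactly when some open path joins a seed to $v$; any such open walk contains an open \emph{simple} path, and a simple path in an $n$-vertex graph has length at most $n-1$, so the reaching length $l$ ranges over $\{0,\dots,n-1\}$. Writing $p(v)=1-\prob\big(\bigcap_{l=0}^{n-1}A_l(v)^{c}\big)$ and recalling that each $A_l(v)^{c}$ is a decreasing event, I would apply FKG to the indicators $\1_{A_l(v)^{c}}$ to obtain positive correlation. Extending the pairwise FKG bound to all $n$ events by induction (the intersection of decreasing events is again decreasing) gives $\prob\big(\bigcap_l A_l(v)^{c}\big)\ge\prod_{l=0}^{n-1}(1-p_l(v))$, and substituting yields $p(v)\le 1-\prod_{l=0}^{n-1}(1-p_l(v))$.

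For the second inequality I would use the analogous decomposition by the \emph{last edge} of the reaching path: the event that $v$ is reached splits, according to its penultimate vertex $u\in N^-(v)$, into the events that an open path of the appropriate length reaches $u$ without revisiting $v$ and that the final edge $(u,v)$ is open, which is precisely $\bigcup_{u\in N^-(v)}A_l(u\to v)$. Each $A_l(u\to v)$ is increasing, so the same FKG-plus-induction argument applied to the decreasing complements $A_l(u\to v)^{c}$ gives $\prob\big(\bigcap_{u}A_l(u\to v)^{c}\big)\ge\prod_{u\in N^-(v)}(1-p_l(u\to v))$, and taking complements produces $p_l(v)\le 1-\prod_{u\in N^-(v)}(1-p_l(u\to v))$.

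The routine parts are the monotonicity checks and the union decompositions; the step deserving the most care is the passage from the two-function FKG inequality stated above to the full product over all $l$ (respectively all $u$). I would handle this by a clean induction, at each stage pairing one decreasing event against the intersection of the remaining ones and invoking that an intersection of decreasing events is decreasing so that FKG applies again. It is also worth flagging \emph{why} the result is an inequality rather than an equality: the path-events share edges and are therefore positively correlated, so $\prob\big(\bigcap_l A_l(v)^{c}\big)$ exceeds the independent product in general, and FKG is exactly the tool that certifies the correct direction of this discrepancy.
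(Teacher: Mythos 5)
Your proposal is correct and follows essentially the same route as the paper: decompose $A(v)$ (resp.\ $A_l(v)$) as a union of increasing path events, pass to the intersection of the decreasing complements, and apply the FKG inequality (extended to $n$ events by induction) to lower-bound that intersection by the product of the marginals. The paper states this more tersely, simply citing ``positive correlation among the events'' via FKG; your write-up supplies the same argument with the lattice and monotonicity hypotheses checked explicitly.
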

\par\vspace{-0.5em}

\begin{proof}
Recall that $p(v)=\prob(A(v))$, $p_l(v)=\prob(A_l(v))$, and $p_l(u\to v)=\prob(A_l(u\to v))$.
\begin{eqnarray}
p(v)&=&\prob(\cup_{l=0}^{n-1}A_l(v))\\
&=&1-\prob(\cap_{l=0}^{n-1}A_l(v)^C)\\
&\leq&1-\prod_{l=0}^{n-1}\prob(A_l(v)^C)\label{eqn:fkg1}\\
&=&1-\prod_{l=0}^{n-1}(1-p_l(v)).
\end{eqnarray}
Equation (\ref{eqn:fkg1}) follows from positive correlation among the events $A_l(v)^C$ for all $v\in V$, which can be proved by FKG inequality. Similarly,
\begin{eqnarray}
p_l(v)&=&\prob(\cup_{u\in N^-(v)}A_l(u\to v))\\
&=&1-\prob(\cap_{u\in N^-(v)}A_l(u\to v)^C)\\
&\leq&1-\prod_{u\in N^-(v)}\prob(A_l(u\to v)^C)\\
&=&1-\prod_{u\in N^-(v)}(1-p_l(u\to v)).
\end{eqnarray}
\end{proof}

Lemma~\ref{lemma1} suggests that given $p_l(u\to v)$, we may compute an upper bound on the influence. Ideally, $p_l(u\to v)$ can be computed by considering all paths with length $l$. However, this results in exponential complexity $O(n^l)$, as $l$ gets up to $n-1$. Thus, we present an efficient way to compute an upper bound $\ub_l(u\to v)$ on $p_l(u\to v)$, which in turns gives an upper bound $\ub_l(v)$ on $p_l(v)$, with the following recursion formula.

\begin{definition}\label{up:recursion} For all $l\in \{0,\dots,n-1\}$ and $u,v\in V$ such that $(u,v)\in E$, $\ub_l(u)\in [0,1]$ and $\ub_l(u\to v)\in [0,1]$ are defined recursively as follows.
\\\emph{Initial condition:} For every\: $s\!\in\!S_0$,\: $s^+\!\!\in\!N^+(s)$,\: $u\!\in\!V\!\setminus\!S_0$,\: and $v\!\in\!N^+(u)$,
\begin{eqnarray}
&&\ub_0(s)=1,\;\ub_0(s\to s^+)={\cal P}_{ss^+}\\
&&\ub_0(u)=0,\;\ub_0(u\to v)=0.
\end{eqnarray}
\emph{Recursion:} For every\: $l>0$,\: $s\!\in\!S_0$,\: $s^+\!\!\in\!N^+(s)$,\: $s^-\!\!\in\!N^-(s)$,\: $u\!\in\!V\!\setminus\!S_0$,\: and $v\!\in\!N^+(u)\!\setminus\!S_0$,
\begin{eqnarray}
&&\ub_l(s)=0, \ub_l(s\to s^+)=0, \ub_l(s^-\to s)=0\label{eqn:seed}\\
&&\ub_l(u)=1-\!\!\!\!\!\prod_{w\in N^-(u)}\!\!\!\!\!(1-\ub_{l-1}(w\to u))\label{eqn:inprob}\\
&&\ub_l(u\to v)={\begin{cases} 
      {\cal P}_{uv}(1-\frac{1-\ub_l(u)}{1-\ub_{l-1}(v\to u)}), & \text{if } v\!\in\!N^-(u) \\          {\cal P}_{uv}\ub_l(u), & \text{otherwise. }\end{cases}}\label{eqn:nb}
\end{eqnarray}
\end{definition}
Equation (\ref{eqn:seed}) follows from that for any seed node $s\in S_0$ and for all $l>0$, $p_l(s)=0$, $p_l(s\to s^+)=0$, $p_l(s^-\to s)=0$.
A naive way to compute $\ub_l(u\to v)$ is $\ub_l(u\to v)={\cal P}_{uv}\ub_{l-1}(u)$, but this results in an extremely loose bound due to the backtracking. For a tighter bound, we use nonbacktracking in Equation (\ref{eqn:nb}), i.e. when computing $\ub_l(u\to v)$, we ignore the contribution of $\ub_{l-1}(v\to u)$.

\begin{thm}
For any independent cascade model $IC(G,{\cal P},S_0)$,
\begin{eqnarray}
\sigma(S_0)&\leq&\sum_{v\in V}(1-\prod_{l=0}^{n-1}(1-\ub_l(v)))\eqqcolon\upinf(S_0),\label{eqn:upresult}
\end{eqnarray}
where $\ub_l(v)$ is obtained recursively as in Definition~\ref{up:recursion}.
\end{thm}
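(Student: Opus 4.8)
The plan is to reduce the theorem to the single family of inequalities $p_l(v)\le\ub_l(v)$, valid for every $v\in V$ and every $l\in\{0,\dots,n-1\}$, and then to feed these into Lemma~\ref{lemma1}. Indeed, by definition of the influence $\sigma(S_0)=\sum_{v\in V}p(v)$, and the first inequality of Lemma~\ref{lemma1} gives $p(v)\le 1-\prod_{l=0}^{n-1}(1-p_l(v))$. Because the map $(x_0,\dots,x_{n-1})\mapsto 1-\prod_{l}(1-x_l)$ is nondecreasing in each coordinate on $[0,1]^n$, the bounds $p_l(v)\le\ub_l(v)$ let me replace each $p_l(v)$ by $\ub_l(v)$ and obtain $p(v)\le 1-\prod_{l=0}^{n-1}(1-\ub_l(v))$; summing over $v$ is then precisely the claimed inequality $\sigma(S_0)\le\upinf(S_0)$.

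Everything thus rests on proving $p_l(v)\le\ub_l(v)$, which I would establish by induction on $l$, carried out simultaneously with the companion message inequality $p_l(u\to v)\le\ub_l(u\to v)$ for every edge $(u,v)\in E$. The base case $l=0$ is immediate from the initial conditions of Definition~\ref{up:recursion}: $p_0(s)=\ub_0(s)=1$ and $p_0(s\to s^+)=\ub_0(s\to s^+)={\cal P}_{ss^+}$ for seeds, all non-seed quantities vanish, and the seed equations~(\ref{eqn:seed}) record $p_l(s)=0$ for $l>0$. Within each level $l$ the node bound is derived first: the path decomposition $A_l(u)=\cup_{w\in N^-(u)}A_{l-1}(w\to u)$ and the FKG union bound (exactly as in Lemma~\ref{lemma1}) give $p_l(u)\le 1-\prod_{w\in N^-(u)}(1-p_{l-1}(w\to u))$, whereupon the induction hypothesis $p_{l-1}(w\to u)\le\ub_{l-1}(w\to u)$ and coordinatewise monotonicity turn the right-hand side into $\ub_l(u)$ via~(\ref{eqn:inprob}).

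The crux is the message inequality $p_l(u\to v)\le\ub_l(u\to v)$, where nonbacktracking enters. First I would factor out the last edge: a length-$l$ path to $u$ that avoids $v$ uses no edge incident to $v$, in particular not $(u,v)$, so the edge $(u,v)$ is independent of that path event and $p_l(u\to v)={\cal P}_{uv}\,\prob(B_l)$, where $B_l$ is the event that $u$ is reached by an open length-$l$ path not passing through $v$. The key algebraic step is to rewrite~(\ref{eqn:nb}) in product form: substituting $1-\ub_l(u)=\prod_{w\in N^-(u)}(1-\ub_{l-1}(w\to u))$ cancels the $v$-indexed factor and yields
\begin{eqnarray}
\ub_l(u\to v)={\cal P}_{uv}\Big(1-\prod_{w\in N^-(u)\setminus\{v\}}(1-\ub_{l-1}(w\to u))\Big),\nonumber
\end{eqnarray}
displaying the nonbacktracking move as the deletion of the incoming message along $(v,u)$. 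Now $B_l\subseteq\cup_{w\in N^-(u)\setminus\{v\}}A_{l-1}(w\to u)$, since a path whose final edge is $(v,u)$ must visit $v$; bounding this union by FKG and applying the induction hypothesis coordinatewise gives $\prob(B_l)\le 1-\prod_{w\in N^-(u)\setminus\{v\}}(1-\ub_{l-1}(w\to u))$, and multiplying by ${\cal P}_{uv}$ closes the induction. When $v\notin N^-(u)$ the bound is $\ub_l(u\to v)={\cal P}_{uv}\ub_l(u)$ and one only needs $\prob(B_l)\le p_l(u)\le\ub_l(u)$, which holds because restricting to paths avoiding $v$ can only decrease the probability.

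The main obstacle is getting this message step to point the right way. The nonbacktracking correction removes only those paths whose \emph{final} edge is $(v,u)$, whereas the event $B_l$ I must dominate excludes \emph{every} path through $v$; the containment $B_l\subseteq\cup_{w\in N^-(u)\setminus\{v\}}A_{l-1}(w\to u)$ is exactly what reconciles the two and certifies an upper rather than a lower bound. A secondary point is that the division form in~(\ref{eqn:nb}) needs $\ub_{l-1}(v\to u)<1$ to be well-defined, which holds whenever ${\cal P}_{vu}<1$; in any case the product form displayed above is the quantity actually used, is manifestly in $[0,1]$, and may be taken as the working definition throughout the induction.
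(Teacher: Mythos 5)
Your proposal is correct and follows essentially the same route as the paper: a simultaneous induction on $l$ for $p_l(v)\le\ub_l(v)$ and $p_l(u\to v)\le\ub_l(u\to v)$, using independence of the last edge, the FKG union bound, and the product-form rewriting of~(\ref{eqn:nb}) that cancels the backtracking factor, followed by Lemma~\ref{lemma1} and coordinatewise monotonicity. The only differences are presentational (you use the containment $B_l\subseteq\cup_{w\in N^-(u)\setminus\{v\}}A_{l-1}(w\to u)$ where the paper writes an exact decomposition over paths avoiding both $v$ and $w$ and then relaxes, and you explicitly flag the $\ub_{l-1}(v\to u)=1$ degeneracy of the division form), neither of which changes the argument.
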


\begin{proof}
We provide a proof by induction.
The initial condition, for $l=0$, can be easily checked. For every\: $s\!\in\!S_0$,\: $s^+\!\!\in\!N^+(s)$,\: $u\!\in\!V\!\setminus\!S_0$,\: and $v\!\in\!N^+(u)$,
\begin{eqnarray}
p_0(s)=1&\leq&\ub_0(s)=1\\
p_0(s\to s^+)={\cal P}_{ss^+}&\leq&\ub_0(s\to s^+)={\cal P}_{ss^+}\\
p_0(u)=0&\leq&\ub_0(u)=0\\
p_0(u\to v)=0&\leq&\ub_0(u\to v)=0.
\end{eqnarray}
For each $l\leq L$, assume that $p_l(v)\leq \ub_l(v)$ and $p_l(u\to v)\leq \ub_l(u\to v)$ for all $u,v\in V$. 

Since $p_l(s)=p_l(s\to s^+)=p_l(s^-\to s)=0$ for every $l\geq 1$,\: $s\!\in\!S_0$,\: $s^+\!\!\in\!N^+(s)$,\: and $s^-\!\!\in\!N^-(s)$,
it is sufficient to show $p_{L+1}(v)\leq \ub_{L+1}(v)$ and $p_{L+1}(u\to v)\leq \ub_{L+1}(u\to v)$ for all $u\!\in\!V\!\setminus\!S_0$,\: and $v\!\in\!N^+(u)$.

For simplicity, for any pair of events $(A,B)$, use the notation $AB$ for $A\cap B$. 

For any $v\in V\setminus S_0$,
\begin{eqnarray}
p_{L+1}(v)=\prob(\cup_{u\in N^-(v)} E_{uv}A_{L,\{v\}}(u)),
\end{eqnarray}
where $E_{uv}$ denotes the event that edge $(u,v)$ is open, i.e. $\prob(E_{uv})={\cal P}_{uv}$. Thus,
\begin{eqnarray}
p_{L+1}(v)&=&1-\prob(\cap_{u\in N^-(v)}(E_{uv}A_{L,\{v\}}(u))^C)\\
&\leq&1-\prod_{u\in N^-(v)}(1-\prob(E_{uv}A_{L,\{v\}}(u)))\label{eqn:upfkg}\\
&=&1-\prod_{u\in N^-(v)}(1-p_{L}(u\to v))\\
&\leq&1-\prod_{u\in N^-(v)}(1-\ub_{L}(u\to v))=\ub_{L+1}(v),\label{eqn:upassume}
\end{eqnarray}
where Equation (\ref{eqn:upfkg}) is obtained by the positive correlation among the events $E_{uv}A_{L,\{v\}}(u)$, and Equation (\ref{eqn:upassume}) comes from the assumption. 

For any $v\in V\setminus S_0$ and $w\in N^+(v)$, 
\begin{eqnarray}
p_{L+1}(v\to w)&=&\prob(E_{vw}A_{L+1,\{w\}}(v)).\\
&=&{\cal P}_{vw}\prob(A_{L+1,\{w\}}(v))\label{eqn:upind}
\end{eqnarray}
Equation (\ref{eqn:upind}) follows from the independence between the events $E_{vw}$ and $A_{L+1,\{w\}}(v)$.\\
If $w\in N^-(v)$,
\begin{eqnarray}
p_{L+1}(v\to w)&=&{\cal P}_{vw}\prob(\cup_{u\in N^-(v)\setminus \{w\}}E_{uv}A_{L,\{v,w\}}(u))\\
&\leq&{\cal P}_{vw}\left(1-\!\!\!\!\!\!\prod_{u\in N^-(v)\setminus \{w\}}\!\!\!\!\!\!\!\!(1-\prob(E_{uv}A_{L,\{v,w\}}(u)))\right)\\
&\leq&{\cal P}_{vw}\left(1-\!\!\!\!\!\!\prod_{u\in N^-(v)\setminus \{w\}}\!\!\!\!\!\!\!\!(1-p_L(u\to v))\right)\label{eqn:uputov}\\
&\leq&{\cal P}_{vw}\left(1-\!\!\!\!\!\!\prod_{u\in N^-(v)\setminus \{w\}}\!\!\!\!\!\!\!\!(1-\ub_L(u\to v))\right)\label{eqn:uputov},
\end{eqnarray}
Equation (\ref{eqn:uputov}) holds, since the two events satisfy $E_{uv}A_{L,\{v,w\}}(u)\subseteq E_{uv}A_{L,\{v\}}(u)$.\\
Recall that, if $w\in N^-(v)$,
\begin{eqnarray}
\ub_{L+1}(v\to w)&=&{\cal P}_{vw}(1-\frac{1-\ub_{L+1}(v)}{1-\ub_{L}(w\to v)})\\
&=&{\cal P}_{vw}(1-\prod_{u\in N^-(v)\setminus \{w\}}(1-\ub_L(u\to v))).
\end{eqnarray}
Thus, $p_{L+1}(v\to w)\leq\ub_{L+1}(v\to w)$, for all $w\in N^+(v)\cap N^-(v)$.\\
If $w\notin N^-(v)$,
\begin{eqnarray}
p_{L+1}(v\to w)&=&{\cal P}_{vw}\prob(\cup_{u\in N^-(v)}E_{uv}A_{L,\{v,w\}}(u))\\
&\leq&{\cal P}_{vw}\left(1-\!\!\!\!\!\!\prod_{u\in N^-(v)}\!\!\!\!(1-\ub_L(u\to v))\right)\\
&=&{\cal P}_{vw}\ub_{L+1}(v)=\ub_{L+1}(v\to w),
\end{eqnarray}
Hence, $p_{L+1}(v\to w)\leq\ub_{L+1}(v\to w)$, for all $w\in N^+(v)$, concluding the induction proof.\\
Finally, by Lemma~\ref{lemma1},
\begin{eqnarray}
\sigma(S_0)&\leq&\sum_{v\in V}(1-\prod_{l=0}^{n-1}(1-p_l(v)))\\
&\leq&\sum_{v\in V}(1-\prod_{l=0}^{n-1}(1-\ub_l(v)))=\upinf(S_0).
\end{eqnarray}
\end{proof}

\subsection{Algorithm for NB-UB}
Next, we present Nonbacktracking Upper Bound (NB-UB) algorithm which efficiently computes $\ub_l(v)$ and $\ub_l(u\to v)$ by message passing. At $l$-th iteration, the variables in \nbub\ represent as follows.
\begin{itemize}[leftmargin=0.5cm,topsep=-2pt,parsep=0pt] 
    \item[$\cdot$] $S_l$ : set of nodes that are processed at $l$-th iteration.
    \item[$\cdot$] $\incur(v) = \{(u, \ub_{l-1}(u\to v)): u \text{ is an in-neighbor of }v, \text{ and } u\in S_{l-1}\}$,\\
    set of pairs (previously processed in-neighbor $u$ of $v$, incoming message from $u$ to $v$).
    \item[$\cdot$] $\vcur(v)=\{u:u \text{ is a in-neighbor of }v, \text{ and } u\in S_{l-1}\}$,\\
    set of in-neighbor nodes of $v$ that were processed at the previous step.
    \item[$\cdot$] $\incur(v)[u]=\ub_{l-1}(u\to v)$, the incoming message from $u$ to $v$.
    \item[$\cdot$] $\innext(v) = \{(u, \ub_{l}(u\to v)): u \text{ is an in-neighbor of }v, \text{ and } u\in S_{l}\}$,\\
    set of pairs (currently processed in-neighbor $u$, next iteration's incoming message from $u$ to $v$).
\end{itemize}

\begin{algorithm}[h!]
   \caption{Nonbacktracking Upper Bound (NB-UB)}
   \label{alg:upper}
\begin{algorithmic}
   \State {\bfseries Initialize:} $\ub_l(v) = 0$ for all $0\leq l\leq n-1$ and $v\in V$
   \State {\bfseries Initialize:} Insert $(v, 1)$ to $\innext(v)$ for all $v\in S_0$
   \For{$l=0$ {\bfseries to} $n-1$}
   \For{$u \in S_l$}
   \State $\incur(u) = \innext(u)$
   \State Clear $\innext(u)$
   \State $\ub_l(u) = \inprob(\incur(u))$
   \For{$v\in N^+(u)\setminus S_0$}
   \State $S_{l+1}.\minsert(v)$
   \If{$v\in \vcur(u)$}
   \State $\ub_l(u\to v)=\outprob(\incur(u)[v],\ub_l(u),{\cal P}_{uv})$
   \State $\innext(v).\minsert((u,\ub_l(u\to v)))$.
   \Else 
   \State $\ub_l(u\to v)=\outprob(0,\ub_l(u),{\cal P}_{uv})$
   \State $\innext(v).\minsert((u,\ub_l(u\to v)))$.
   \EndIf
   \EndFor
   \EndFor
   \EndFor
   \State {\bfseries Output:} $\ub_l(u)$ for all $l$, $u$
\end{algorithmic}
\end{algorithm}

\vspace{1em}\noindent At the beginning, every seed node $v\in S_0$ is initialized such that $\incur(v)=\{(v,1)\}$. For each $l$-th iteration, every node $u$ in $S_l$ is processed as follows. First, $\inprob(\incur(u))$ computes $\ub_l(u)$ as in Equation (\ref{eqn:inprob}). Second, $u$ passes a message to its neighbor $v\in N^+(u)\setminus S_0$ along the edge $(u,v)$, and $v$ stores (inserts) the message in $\innext(v)$ for the next iteration. The message contains 1) the source of the message, $u$, and 2) $\ub_l(u\to v)$, which is computed as in Equation (\ref{eqn:nb}), by the function $\outprob$. Finally, the algorithm outputs $\ub_l(u)$ for all $u\in V$ and $l\!\in\!\{0,\dots,n\!-\!1\}$, and the upper bound $\upinf(S_0)$ is computed by Equation (\ref{eqn:upresult}). 

\begin{figure*}[h]
\centering
\includegraphics[width=\textwidth]{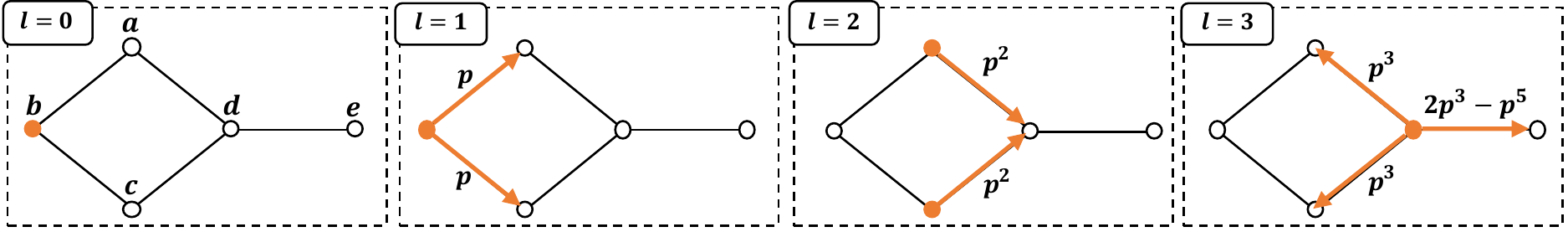}
\caption{The step-wise illustration of \nbub\ algorithm on the example network.}
\label{fig:upper}
\end{figure*}

Next, we illustrate how the algorithm runs on a small network in Figure~\ref{fig:upper}.
The independent cascade model $IC(G,{\cal P},S_0)$ is defined on an undirected graph $G=(V,E)$, where $V=\{a,b,c,d\}$, $S_0=\{b\}$, and every edge has the same transmission probability $p$. For each $l$, Table~\ref{table:up_ex} shows the values of the key variables, $S_l$, $\incur$, and $\ub_l$, in the algorithm and $\ub_l(u\to v)$ for every pair $u,v$ such that $u\in S_l$ and $v\in N^+(u)\setminus S_0$.

\begin{table}[h]
\centering
{\footnotesize
\begin{tabular*}{\textwidth}{c|c|c|c|c|c|c|c|c}
& \multicolumn{2}{c|}{$l=0$} & \multicolumn{2}{c|}{$l=1$} & \multicolumn{2}{c|}{$l=2$} &  \multicolumn{2}{c}{$l=3$}\\ \hline
$S_l$ & \multicolumn{2}{c|}{$\{b\}$} & \multicolumn{2}{c|}{$\{a,c\}$} & \multicolumn{2}{c|}{$\{d\}$} & \multicolumn{2}{c}{$\{a,c,e\}$}\\ \hline
& $\incur$ & $\ub_0$ & $\incur$ & $\ub_1$ & $\incur$ & $\ub_2$ & $\incur$ & $\ub_3$\\ \hline
$a$ & $\varnothing$ & 0 & $\{(b,p)\}$ & $p$ & $\varnothing$ & 0 & $\{(d,p^3)\}$ & $p^3$\\ \hline
$b$ & $\{(b,1)\}$ & 1 & $\varnothing$ & 0 & $\varnothing$ & 0 & $\varnothing$ & 0\\ \hline
$c$ & $\varnothing$ & 0 & $\{(b,p)\}$ & $p$ & $\varnothing$ & 0 & $\{(d,p^3)\}$ & $p^3$\\ \hline
$d$ & $\varnothing$ & 0 & $\varnothing$ & 0 & {\begin{tabular}[c]{@{}c@{}}$\{(a,p^2),$\\ $(c,p^2)\}$\end{tabular}} & $2p^2-p^4$ & $\varnothing$ & 0\\ \hline
$e$ & $\varnothing$ & 0 & $\varnothing$ & 0 & $\varnothing$ & 0 & $\{(d,2p^3-p^5)\}$ & $2p^3-p^5$\\ \hline
{\begin{tabular}[c]{@{}c@{}}Out\\-Prob\end{tabular}} & \multicolumn{2}{c|}{\begin{tabular}[c]{@{}c@{}}$\ub_0(b\to a)=p$\\ $\ub_0(b\to c)=p$\end{tabular}} & \multicolumn{2}{c|}{\begin{tabular}[c]{@{}c@{}}$\ub_1(a\to d)=p^2$\\ $\ub_1(c\to d)=p^2$\end{tabular}} & \multicolumn{2}{c|}{\begin{tabular}[c]{@{}c@{}}$\ub_2(d\to a)=p^3$\\ $\ub_2(d\to c)=p^3$\\ $\ub_2(d\to e)=2p^3-p^5$\end{tabular}} & \multicolumn{2}{c}{\begin{tabular}[c]{@{}c@{}}$\ub_3(a\to d)=0$\\ $\ub_3(c\to d)=0$\\ $\ub_3(e\to d)=0$\end{tabular}}
\end{tabular*}
}
\caption{The values of the key variables in \nbub\ algorithm on the example network in Figure~\ref{fig:upper}.}
\label{table:up_ex}
\end{table}

For example, at $l=2$, since $S_2=\{d\}$, node $d$ is processed. Recall that, at $l=1$, node $a$ sent the message $(a,\ub_1(a\to d))$ to $d$, and node $c$ sent the message $(c,\ub_1(c\to d))$ to $d$. Thus,
\begin{eqnarray}
    \incur(d)&=&\{(a,\ub_1(a\to d)),(c,\ub_1(c\to d))\}=\{(a,p^2),(c,p^2)\}\\
    \vcur(d)&=&\{a,c\},
\end{eqnarray}
and node $d$ is processed as follows.\\
First, compute $\ub_2(d)$ as
\begin{eqnarray}
    \ub_2(d)&=&\inprob(\incur(d))\\
    &=&1-(1-\ub_1(a\to d))(1-\ub_1(c\to d))=2p^2-p^4.
\end{eqnarray}
Next, set $S_{3}=N^+(d)\!\setminus\!S_0=\{a,c,e\}$, and compute $\ub_2(d\to a)$, $\ub_2(d\to c)$, and $\ub_2(d\to e)$ as 
\begin{eqnarray}
\ub_2(d\to a)&=&\outprob(\ub_1(a\to d),\ub_2(d),{\cal P}_{da})\\
&=&{\cal P}_{da}(1-\frac{1-\ub_2(d)}{1-\ub_1(a\to d)})=p^3\\
\ub_2(d\to c)&=&\outprob(\ub_1(c\to d),\ub_2(d),{\cal P}_{dc})\\
&=&{\cal P}_{dc}(1-\frac{1-\ub_2(d)}{1-\ub_1(c\to d)})=p^3\\
\ub_2(d\to e)&=&\outprob(0,\ub_2(d),{\cal P}_{de})\\
&=&{\cal P}_{de}(1-\frac{1-\ub_2(d)}{1-0})={\cal P}_{de}\ub_2(d)=2p^3-p^5.
\end{eqnarray}
Then, node $d$ send messages $(d,\ub_2(d\to a))$ to $a$, $(d,\ub_2(d\to c))$ to $c$, and $(d,\ub_2(d\to e))$ to $e$, concluding the process of the $l=2$ step.\\

\noindent {\bf Computational complexity}: Notice that for each iteration $l\!\in\!\{0,\dots,n\!-\!1\}$, the algorithm accesses at most $n$ nodes, and for each node, the functions $\inprob$ and $\outprob$ are computed in $O(\deg(v))$. Therefore, the worst case computational complexity is $O(|V|^2+|V||E|)$.

\subsection{Nonbacktracking lower bounds (\nblb)}
A naive way to compute a lower bound on the influence in a network $IC(G,{\cal P},S_0)$ is to reduce the network to a (spanning) tree network, by removing edges. Then, since there is a unique path from a node to another, we can compute the influence of the tree network, which is a lower bound on the influence in the original network, in $O(|V|)$. We take this approach of generating a subnetwork from the original network, yet we avoid the significant gap between the bound and the influence by considering the following directed acyclic subnetworks, in which there is no backtracking walk. 

\begin{definition}[Min-distance Directed Acyclic Subnetwork] Consider an independent cascade model $IC(G,{\cal P},S_0)$ with $G = (V,E)$ and $|V|=n$. Let $d(S_0,v)$ be the minimum distance from a seed in $S_0$ to $v$. A minimum-distance directed acyclic subnetwork (MDAS), $IC(G',{\cal P}',S_0)$, where $G'=(V',E')$, is obtained as follows.
\begin{itemize}[leftmargin=0.5cm,topsep=-2pt,parsep=0pt] 
    \item[$\cdot$] $V' = \{v_1, ... , v_n\}$ is an ordered set of nodes such that $d(S_0,v_i)\leq d(S_0,v_j)$, for every $i < j$.
    \item[$\cdot$] $E'=\{(v_i,v_j)\in E:i<j\}$, i.e. remove edges from $E$ whose source node comes later in the order than its destination node to obtain $E'$.
    \item[$\cdot$] ${\cal P}'_{v_iv_j}={\cal P}_{v_iv_j}$, if $(v_i,v_j)\in E'$, and ${\cal P}'_{v_iv_j}=0$, otherwise.
\end{itemize}
\end{definition}
\noindent If there are multiple ordered sets of vertices that satisfying the condition, we may choose an order arbitrarily.

For any $k\in [n]$, let $p(v_k)$ be the probability that $v_k\in V'$ is infected in the MDAS, $IC(G',{\cal P}',S_0)$. Since $p(v_k)$ is equivalent to the probability of the union of the events that an in-neighbor $u_i\in N^-(v_k)$ infects $v_k$,\: $p(v_k)$ can be computed by the principle of inclusion and exclusion. Thus, we may compute a lower bound on $p(v_k)$, using Bonferroni inequalities, if we know the probabilities that in-neighbors $u$ and $v$ both infects $v_k$, for every pair $u,v\in N^-(v_k)$. However, computing such probabilities can take $O(k^k)$. Hence, we present $\lb_l(v_k)$ which efficiently computes a lower bound on $p(v_k)$ by the following recursion.

\begin{definition}\label{def:low}For all $v_k\in V'$, $\lb(v_k)\in [0,1]$ is defined by the recursion on $k$ as follows.\\
\emph{Initial condition:} for every $v_s\in S_0$,
\begin{eqnarray}
\lb(v_s)=1.
\end{eqnarray}
\emph{Recursion:} for every $v_k\in V'\setminus S_0$,
\begin{eqnarray}
\lb(v_k)=\sum_{i=1}^{m^*}\left({\cal P}'_{u_iv_k}\lb(u_i)(1-\sum_{j=1}^{i-1}{\cal P}'_{u_jv_k})\right),\label{eqn:low1}
\end{eqnarray}
where $N^-(v_k)\!=\!\{u_1,\dots,u_m\}$ is the ordered set of in-neighbors of $v_k$ in $IC(G',{\cal P}',S_0)$ and ${m^*\!=\!\max\{m'\leq m:\sum_{j=1}^{m'-1}{\cal P}'_{u_jv_k}\leq 1\}}$.
\end{definition}
\noindent{\bf Remark.} Since the $i$-th summand in Equation (\ref{eqn:low1}) can utilize $\sum_{j=1}^{i-2}{\cal P}'_{u_jv_k}$, which is already computed in $(i\!-\!1)$-th summand, to compute $\sum_{j=1}^{i-1}{\cal P}'_{u_jv_k}$, the summation takes at most $O(\deg(v_k))$.

\begin{thm}
For any independent cascade model $IC(G,{\cal P},S_0)$ and its MDAS $IC(G',{\cal P}',S_0)$, 
\begin{eqnarray}
\sigma(S_0)\geq\sum_{v_k\in V'}\lb(v_k)\eqqcolon\lowinf(S_0),
\end{eqnarray}
where $\lb(v_k)$ is obtained recursively as in Definition~\ref{def:low}.
\end{thm}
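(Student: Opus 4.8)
The plan is to split the argument into two stages. First, I show that replacing $G$ by its MDAS $G'$ can only decrease the influence: writing $p_{G'}(v_k)$ for the true infection probability of $v_k$ in the subnetwork, I claim $\sigma(S_0)\ge\sum_{v_k\in V'}p_{G'}(v_k)$. This follows from a coupling in the live-arc formulation — since $G'$ is obtained from $G$ only by deleting edges (equivalently zeroing some entries of ${\cal P}$) while leaving the remaining transmission probabilities unchanged, I assign each shared edge the same open/closed Bernoulli variable in both models; then every node reachable from $S_0$ through open edges in $G'$ is reachable in $G$, so $p_{G'}(v)\le p_G(v)$ for all $v$, and summing over $v$ gives the claim. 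This is just the monotonicity of infection in the edge states that the FKG framework already anticipates. It then remains to prove the second stage, $\lb(v_k)\le p_{G'}(v_k)$ for every $v_k$, after which chaining the two inequalities yields $\sigma(S_0)\ge\sum_{v_k\in V'}\lb(v_k)=\lowinf(S_0)$.

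For the second stage I would induct on the distance ordering $v_1,\dots,v_n$, the seeds forming the base case since $\lb(v_s)=1=p_{G'}(v_s)$. Fix $v_k\notin S_0$ with ordered in-neighbors $u_1,\dots,u_m$, and let $B_i$ be the event that $u_i$ is infected and the edge $(u_i,v_k)$ is open, so $p_{G'}(v_k)=\prob(\cup_{i=1}^m B_i)$. I decompose this union by the telescoping identity $\prob(\cup_i B_i)=\sum_i\prob(B_i\cap\bigcap_{j<i}B_j^C)$. The structural fact driving the argument, which follows from the acyclicity of the MDAS, is that each edge event $\{(u_i,v_k)\text{ open}\}$ is independent of all the infection events $\{u_j\text{ infected}\}$: in $G'$ the infection of any $u_j$ uses only edges among nodes preceding $v_k$ in the ordering and never the edges into $v_k$. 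This independence lets me pull the coefficients ${\cal P}'_{u_iv_k}$ out of every probability cleanly.

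I would then bound each summand below by the first Bonferroni inequality, $\prob(B_i\cap\bigcap_{j<i}B_j^C)\ge\prob(B_i)-\sum_{j<i}\prob(B_i\cap B_j)$, estimate the pairwise term via $\prob(u_i,u_j\text{ both infected})\le\prob(u_i\text{ infected})$ together with the edge independence, and factor out ${\cal P}'_{u_iv_k}\,p_{G'}(u_i)$ to reach the summand ${\cal P}'_{u_iv_k}\,p_{G'}(u_i)(1-\sum_{j<i}{\cal P}'_{u_jv_k})$. The index $m^*$ serves precisely to truncate at the last term where the factor $1-\sum_{j<i}{\cal P}'_{u_jv_k}$ is still nonnegative; since each discarded term $\prob(B_i\cap\bigcap_{j<i}B_j^C)$ is itself a probability and hence nonnegative, dropping it preserves a valid lower bound. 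Because the retained coefficient is nonnegative, I finally substitute the inductive hypothesis $\lb(u_i)\le p_{G'}(u_i)$ to obtain $p_{G'}(v_k)\ge\lb(v_k)$, closing the induction.

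The main obstacle I anticipate is the simultaneous sign bookkeeping around $m^*$: one must verify at once that truncation discards only nonnegative contributions, so the partial sum still lower-bounds $p_{G'}(v_k)$, and that within the retained range the coefficient $1-\sum_{j<i}{\cal P}'_{u_jv_k}$ stays nonnegative, since both the Bonferroni bound and the substitution of the inductive hypothesis depend on that sign. Making the edge/infection independence rigorous from the MDAS ordering is the other point to pin down before any factorization is justified.
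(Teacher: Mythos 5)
Your proposal is correct and follows essentially the same route as the paper's proof: induction along the MDAS ordering, a second-order Bonferroni/inclusion--exclusion lower bound on the union of the events $E_{u_iv_k}\cap A(u_i)$, factorization via independence of the edge states $(u_i,v_k)$ from the upstream infection events, truncation at $m^*$ to keep the coefficients $1-\sum_{j<i}{\cal P}'_{u_jv_k}$ nonnegative before substituting the inductive hypothesis, and finally the monotonicity $\sigma(S_0)\ge\sum_{k}p_{G'}(v_k)$. Your telescoping decomposition followed by a termwise union bound reproduces exactly the paper's Bonferroni step, and your explicit coupling argument for the subnetwork monotonicity and the acyclicity justification of the edge/infection independence simply spell out details the paper asserts without elaboration.
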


\begin{proof}
We provide a proof by induction. For any $v_k\in V'$, let $A(v_k)$ be the event that node $v_k$ is infected in MDAS $IC(G',{\cal P}',S_0)$, and for every edge $(v_j,v_k)$, let $E_{v_j,v_k}$ be the event that edge $(v_j,v_k)$ is open, i.e. $\prob(E_{v_j,v_k})={\cal P}'_{v_jv_k}$. Recall that $p(v_k)=\prob(A(v_k))$. 

The initial condition $k=1$ holds, since $p(v_1)=1\geq \lb(v_1)=1$ ($v_1$ is a seed). 

For every $k\leq K$, assume $p(v_k)\geq \lb(v_k)$.

For the node $v_{K+1}$,
\begin{eqnarray}
p(v_{K+1})&=&\prob(\cup_{v_j\in N^-(v_{K+1})}E_{v_jv_{K+1}}A(v_j)).
\end{eqnarray}
We re-label vertices in $N^-(v_{K+1})=\{u_1,\dots,u_{m(K+1)}\}$ where $m(K+1)=\mathrm{in}\text{-}\!\deg(v_{K+1})$, and let ${\cal Q}_{iK+1}={\cal P}'_{u_iv_{K+1}}$. Then, for any integer $m\leq m(K+1)$,
\begin{eqnarray}
p(v_{K+1})&=&\prob(\cup_{i=1}^{m(K+1)}E_{u_iv_{K+1}}A(u_i))\\
&\geq&\prob(\cup_{i=1}^{m}E_{u_iv_{K+1}}A(u_i))\\
&\geq&\sum_{i=1}^{m}\prob(E_{u_iv_{K+1}}A(u_i))-\sum_{i=1}^{m}\sum_{j=1}^{i-1}\prob(E_{u_iv_{K+1}}A(u_i)E_{u_jv_{K+1}}A(u_j))\label{lower:pie}\\
&=&\sum_{i=1}^{m}{\cal Q}_{iK+1}\prob(A(u_i))-\sum_{i=1}^{m}\sum_{j=1}^{i-1}{\cal Q}_{iK+1}{\cal Q}_{jK+1}\prob(A(u_i)A(u_j))\label{lower:ind}\\
&\geq&\sum_{i=1}^{m}{\cal Q}_{iK+1}\prob(A(u_i))(1-\sum_{j=1}^{i-1}{\cal Q}_{jK+1})\label{lower:minusbig}.
\end{eqnarray}
Equation (\ref{lower:pie}) follows from the principle of inclusion and exclusion. Equation (\ref{lower:ind}) results from the Independence between the event that an edge ending with $v_{K+1}$ is open and the event that a node $v_i$ is infected where $i<K+1$. Equation (\ref{lower:minusbig}) holds since $\prob(A(u_i))\geq \prob(A(u_i)A(u_j))$. 

Now, define $m^*=\max\{m'\leq m(K+1):\sum_{j=1}^{m'-1}{\cal Q}_{jK+1}\leq 1\}$.
Then,
\begin{eqnarray}
p(v_{K+1})&\geq&\sum_{i=1}^{m^*}{\cal Q}_{iK+1}\prob(A(u_i))(1-\sum_{j=1}^{i-1}{\cal Q}_{jK+1})\\
&\geq&\sum_{i=1}^{m^*}{\cal Q}_{iK+1}\lb(u_i)(1-\sum_{j=1}^{i-1}{\cal Q}_{jK+1})\label{lower:nonneg}\\
&=&\lb(v_{K+1}).
\end{eqnarray}
Equation (\ref{lower:nonneg}) follows since $1-\sum_{j=1}^{i-1}{\cal Q}_{jK+1}\geq0$ for all $i\leq m^*$ by the definition of $m^*$. Thus, $p(v_i)\geq \lb(v_i)$ for all $v_i\in V'$, concluding the induction proof. \\
Finally,
\begin{eqnarray}
\sigma(S_0)&\geq&\sum_{i=1}^{n}p(v_i)\label{eqn:subnetwork}\\
&\geq&\sum_{i=1}^{n}\lb(v_i)=\lowinf(S_0).
\end{eqnarray}
Equation (\ref{eqn:subnetwork}) holds since its right hand side equals to the influence of the MDAS, $IC(G',{\cal P}',S_0)$.
\end{proof}

\subsection{Algorithm for NB-LB}
Next, we present Nonbacktracking Lower Bound (NB-LB) algorithm which efficiently computes $\lb(v_k)$. At $k$-th iteration, the variables in \nblb\ represent as follow.
\begin{itemize}[leftmargin=0.5cm,topsep=-2pt,parsep=0pt]
    \item[$\cdot$] $\lowin(v_k)=\{(\lb(v_t),{\cal P}'_{v_tv_k}):v_t\text{ is an in-neighbor of }v_k\}$, set of pairs (incoming message from an in-neighbor $v_t$ to $v_k$, the transmission probability of edge $(v_t,v_k)$).
    \item[$\cdot$] $\lowin(v_k)_i=(\lowin(v_k)_{i,1},\lowin(v_k)_{i,2})$, the $i$-th pair in $\lowin(v_k)$, for $i\geq 1$.
\end{itemize}

\begin{algorithm}[h]
   \caption{Nonbacktracking Lower Bound (NB-LB)}
   \label{alg:lower}
\begin{algorithmic}
   \State {\bfseries Input:} directed acyclic network $IC(G',{\cal P}',S_0)$
   \State {\bfseries Initialize:} $\lowinf = 0$
   \State {\bfseries Initialize:} Insert $(1,1)$ to $\lowin(v_i)$ for all $v_i\in S_0$
   \For{$k=1$ {\bfseries to} $n$}
   \State $\lb(v_k)=\lowprob(\lowin(v_k))$
   \State $\lowinf \pluseq \lb(v_k)$
   \For{$v_t\in N^+(v_k)\setminus S_0$}
   \State $\lowin(v_t).\minsert((\lb(v_k),{\cal P}'_{v_kv_t}))$
   \EndFor
   \EndFor
   \State {\bfseries Output:} $\lowinf$
\end{algorithmic}
\end{algorithm}
\setlength{\intextsep}{7pt}
 
At the beginning, every seed node $v\in S_0$ is initialized such that $\lowin(v)=\{(1,1)\}$, and $\lowinf=0$. For each $k$-th iteration, node $v_k$ is processed as follows. First, $\lb(v_k)$ is computed as in the Equation (\ref{eqn:low1}), by the  function $\lowprob$, and added to $\lowinf$. Second, $v_k$ passes the message $(\lb(v_k),{\cal P}'_{v_kv_t})$ to its neighbor $v_t\in N^+(v_k)\!\setminus\!S_0$, and $v_t$ stores (inserts) it in $\lowin(v_t)$. Finally, the algorithm outputs $\lowinf$, the lower bound on the influence.

\begin{figure*}[h!t]
\centering
\includegraphics[width=\textwidth]{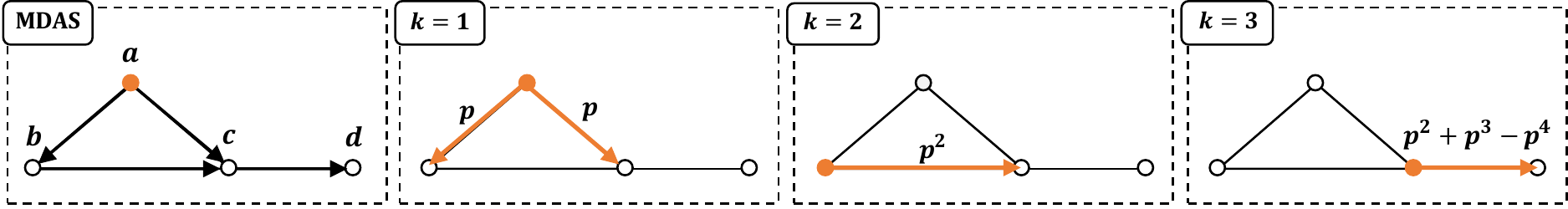}
\caption{The step-wise illustration of \nblb\ on the example network.}
\label{fig:lower}
\end{figure*}

In Figure~\ref{fig:lower}, we show an example for the lower bound computation by \nblb\ on a small network $IC(G,{\cal P},S_0)$ defined on an undirected graph $G=(V,E)$, where $V=\{a,b,c,d\}$, $S_0=\{a\}$, and every edge has the same transmission probability $p$. For each $k$, Table~\ref{table:low_ex} shows the values of the key variables, $\lowin(v_k)$, $\lb(v_k)$, and $(\lb(v_k),{\cal P}'_{v_kv_t})$ for the out-neighbors $v_t\in N^+(v_k)\setminus S_0$, and shows the change in $\lowinf$.

\begin{table}[h]
\centering
{\footnotesize
\begin{tabular}{c|c|c|c|c}
& $k=1$ & $k=2$ & $k=3$ & $k=4$\\ \hline
$v_k$ & $a$ & $b$ & $c$ & $d$\\ \hline
$\lowin(v_k)$ & $\{(1,1)\}$ & $\{(1,p)\}$ & $\{(1,p),(p,p)\}$ & $\varnothing$ \\ \hline
$\lb(v_k)$ & $1$ & $p$ & $p+p^2-p^3$ & $p^2+p^3-p^4$\\ \hline
$N^+(v_k)\setminus S_0$ & $\{b,c\}$ & $\{c\}$ & $\{d\}$ & $\varnothing$ \\ \hline
$(\lb(v_k),{\cal P}'_{v_kv_t})$ to $v_t$ & $(1,p)$ to $b$ and $c$ & $(p,p)$ to $c$ & $(p+p^2-p^3,p)$ to $d$ & \\ \hline
$\lowinf$ & $1$ & $1+p$ & $1+2p+p^2-p^3$ & $1+2p+2p^2-p^4$   
\end{tabular}}
\caption{The values of the key variables in \nblb\ on the example network in Figure~\ref{fig:lower}.}
\label{table:low_ex}
\end{table}

From the undirected network, we obtain MDAS in Figure~\ref{fig:lower} as follows. Since $d(a,S_0)=0$, $d(b,S_0)=d(c,S_0)=1$ and $d(d,S_0)=2$, we order the vertices as $\{v_1\!=\!a,v_2\!=\!b,v_3\!=\!c,v_4\!=\!d\}$ to satisfy that $d(v_i,S_0)\leq d(v_j,S_0)$, for every $i<j$. Then, \nblb\ algorithm process the nodes $\{v_1\!=\!a,v_2\!=\!b,v_3\!=\!c,v_4\!=\!d\}$ sequentially. 

For example, at $k\!=\!3$, node $c$ is processed. Recall that at $k=1$, node $a$ sent the message $(\lb(a),{\cal P}'_{ac})$ to node $c$; at $k=2$, node $b$ sent the message $(\lb(b),{\cal P}'_{bc})$ to node $c$. Thus,
\begin{equation}
    \lowin(c)=\{(\lb(a),{\cal P}'_{ac}),(\lb(b),{\cal P}'_{bc})\}=\{(1,p),(p,p)\}.
\end{equation}
Then, it computes $\lb(c)$ by the function $\lowprob$.
\begin{eqnarray}
    \lb(c)&=&\lowprob(\lowin(c))\\
    &=&{\cal P}'_{ac}\lb(a)+{\cal P}'_{bc}\lb(b)(1-{\cal P}'_{ac})=p+p^2-p^3.
\end{eqnarray}
Recall that $\lowinf=1+p$, at the end of iteration $k=2$, so
\begin{equation}
    \lowinf=1+p+\lb(c)=1+2p+p^2-p^3.
\end{equation}
Next, since $N^+(c)\setminus S_0=\{d\}$, node $c$ sends the message $(\lb(c),{\cal P}_{cd})=(p+p^2-p^3,p)$ to node $d$, concluding the process of the $k=3$ step.\\

\noindent {\bf Computational complexity:} Obtaining an arbitrary directed acyclic subnetwork from the original network takes $O(|V|+|E|)$. Next, the algorithm iterates through the nodes $V'\!=\!\{v_1,\dots,v_n\}$. For each node $v_k$, $\lowprob$ takes $O(\deg(v_k))$ and $v_k$ sends messages to its out-neighbors in $O(\deg(v_k))$. Hence, the worst case computational complexity is $O(|V|+|E|)$.

\subsection{Tunable nonbacktracking bounds}
In this section, we introduce the parametrized version of \nbub\ and \nblb\ which provide control to adjust the trade-off between the efficiency and the accuracy of the bounds. \\

\noindent {\bf Tunable nonbacktracking upper bounds (t\nbub):} The algorithm inputs the parameter $t$, which indicates the maximum length of the paths that the algorithm considers to compute the exact, rather than the upper bound on, probability of infection. For every node $u\in V$, the algorithm computes $p_{{\leq}t}(u)$ that node $u$ is infected by open paths whose length is less than or equal to $t$.
\begin{eqnarray}
p_{{\leq}t}(u)&=&\prob(\cup_{P\in \{\cup_{i=0}^{t}P_i(S_0\to u)\}}\{P\text{ is open}\}).
\end{eqnarray}
Then, we start (non-parameterized) \nbub\ algorithm from $l=t+1$ with the new initial conditions: for all $u\in V$ and $v\in N^+(u)$, 
\begin{eqnarray}
&&\ub_t(u)=p_{\leq t}(u)\\
&&\ub_t(u\to v)=p_t(u\to v)
\end{eqnarray}
Finally, the upper bound by tNB-UB is computed as $\sum_{v\in V}(1-\prod_{l=t}^{n-1}(1-\ub_l(v)))$. 

For higher values of $t$, the algorithm results in tighter upper bounds, while the computational complexity may increase exponentially for dense networks. Thus, this method is most applicable in sparse networks, where the degree of each node is bounded.  

We present here the parametrized algorithms for \nbub. 

\begin{algorithm}[H]
   \caption{Tunable NB-UB (tNB-UB)}
   \label{alg:upperknob}
\begin{algorithmic}
   \State {\bfseries parameter:} non-negative integer $t\leq n-1$
   \State {\bfseries Initialize:} $\ub_t(v) = 0$ for all $t\leq l\leq n-1$ and $v\in V$
   \For{$u\in V$}
   \State $\ub_t(u) =p_{\leq t}(u)$
   \For{$v\in N^+(u)\setminus S_0$}
   \If{$p_t(u\to v)>0$}
   \State $S_{t+1}.\minsert(v)$
   \State $\innext(v).\minsert(u,p_t(u\to v))$
   \EndIf
   \EndFor
   \EndFor
   \For{$l=t+1$ {\bfseries to} $n-1$}
   \For{$u \in S_l$}
   \State $\incur(u) = \innext(u)$
   \State Clear $\innext(u)$
   \State $\ub_l(u) = \inprob(\incur(u))$
   \For{$v\in N^+(u)\setminus S_0$}
   \State $S_{l+1}.\minsert(v)$
   \If{$v\in \incur(u)$}
   \State $\ub_l(u\to v)=\outprob(\incur(u)[v],\ub_l(u),{\cal P}_{uv})$
   \State $\innext(v).\minsert((u,\ub_l(u\to v)))$.
   \Else 
   \State $\ub_l(u\to v)=\outprob(0,\ub_l(u),{\cal P}_{uv})$
   \State $\innext(v).\minsert((u,\ub_l(u\to v)))$.
   \EndIf
   \EndFor
   \EndFor
   \EndFor
   \State {\bfseries Output:} $\ub_l(u)$ for all $l=\{t,t+1,\dots,n-1\}$, $u\in V$
\end{algorithmic}
\end{algorithm}

\noindent {\bf Tunable nonbacktracking lower bounds (tNB-LB)}: We first order the vertex set as $V'=\{v_1,\dots,v_n\}$, which satisfies $d(S_0,v_i)\leq d(S_0,v_j)$, for every $i < j$.
Given a non-negative integer parameter $t\leq n$, we obtain a $t$-size subnetwork $IC(G[V_t],{\cal P}[V_t],S_0\cap V_t)$, where 
$G[V_t]$ is the vertex-induced subgraph which is induced by the set of nodes $V_t=\{v_1,\dots, v_{t}\}$, and ${\cal P}[V_t]$ is the corresponding transmission probability matrix. For each $v_i\in V_t$, we compute the exact probability $p_t(v_i)$ that node $v_i$ is infected in the subnetwork $IC(G[V_t],{\cal P}[V_t],S_0\cap V_t)$. 
Then, we start (non-parameterized) \nblb\ algorithm from $k=t+1$ with the new initial condition: for all $k\leq t$,
\begin{eqnarray}
\lb(v_k)=p_t(v_k).
\end{eqnarray}
Finally, tNB-LB computes the lower bound as $\sum_{v_k\in V'}\lb(v_k)$.

\begin{algorithm}[H]
   \caption{Tunable NB-LB (tNB-LB)}
   \label{alg:lowerknob}
\begin{algorithmic}
   \State {\bfseries parameter:} non-negative integer $t\leq n$
   \State {\bfseries Initialize:} $\lowinf = 0$
   \For{$k=1$ {\bfseries to} $t$}
   \State $\lb(v_k)=p_t(v_k)$
   \State $\lowinf \pluseq \lb(v_k)$
   \For{$v_i\in \{N^+(v_k)\cap\{v_j:j>t\}\}$}
   \State $\lowin(v_i).\minsert((\lb(v_k),{\cal P}'_{v_kv_i}))$
   \EndFor
   \EndFor
   \For{$k=t+1$ {\bfseries to} $n$}
   \State $\lb(v_k)=\lowprob(\lowin(v_k))$
   \State $\lowinf \pluseq \lb(v_k)$
   \For{$v_i\in N^+(v_k)\setminus S_0$}
   \State $\lowin(v_i).\minsert((\lb(v_k),{\cal P}'_{v_kv_i}))$
   \EndFor
   \EndFor
   \State {\bfseries Output:} $\lowinf$
\end{algorithmic}
\end{algorithm}

For a larger $t$, the algorithm results in tighter lower bounds. However, the computational complexity may increase exponentially with respect to $t$, the size of the subnetwork. This algorithm can adopt Monte Carlo simulations on the subnetwork to avoid the large computational complexity. However, this modification results in probabilistic lower bounds, rather than theoretically guaranteed lower bounds. Nonetheless, this can still give a significant improvement, because the Monte Carlo simulations on a smaller size of network require less computations to stabilize the estimation.

\section{Experimental results}

In this section, we evaluate the \nbub\ and \nblb\ in independent cascade models on a variety of classical synthetic networks.

$\bf{Network\ generation:}$
We consider $4$ classical random graph models with the parameters shown as follows: Erdos Renyi random graphs with $ER(n=1000,p=3/1000)$, scale-free networks $SF(n=1000,\alpha=2.5)$, random regular graphs $Reg(n=1000,d=3)$, and random tree graphs with power-law degree distributions $T(n=1000,\alpha=3)$.
For each graph model, we generate $100$ networks $IC(G,pA,\{s\})$ as follows. 
The graph $G$ is the largest connected component of a graph drawn from the graph model, the seed node $s$ is selected randomly from the vertex set of $G$, and $A$ is the adjacency matrix of $G$. Then, the corresponding IC model has the same transmission probability $p$ for every edge.

$\bf{Evaluation\ of\ bounds:}$
For each network generated, we compute the following quantities for each $p\in \{0.1,0.2,\dots,0.9\}$.
\begin{itemize}[leftmargin=0.5cm,topsep=-4pt,parsep=0pt]
    \item[$\cdot$] $\sigma_{mc}$: the estimation of the influence with $1000000$ Monte Carlo simulations.
    \item[$\cdot$] $\upinf$: the upper bound obtained by \nbub.
    \item[$\cdot$] $\upinf_{spec}$: the spectral upper bound by~\cite{lemonnier2014tight}.
    \item[$\cdot$] $\lowinf$: the lower bound obtained by \nblb.
    \item[$\cdot$] $\lowinf_{prob}$: the probabilistic lower bound obtained by $10$ Monte Carlo simulations.
\end{itemize}

\begin{figure*}[h]
\centering
\includegraphics[width=\textwidth]{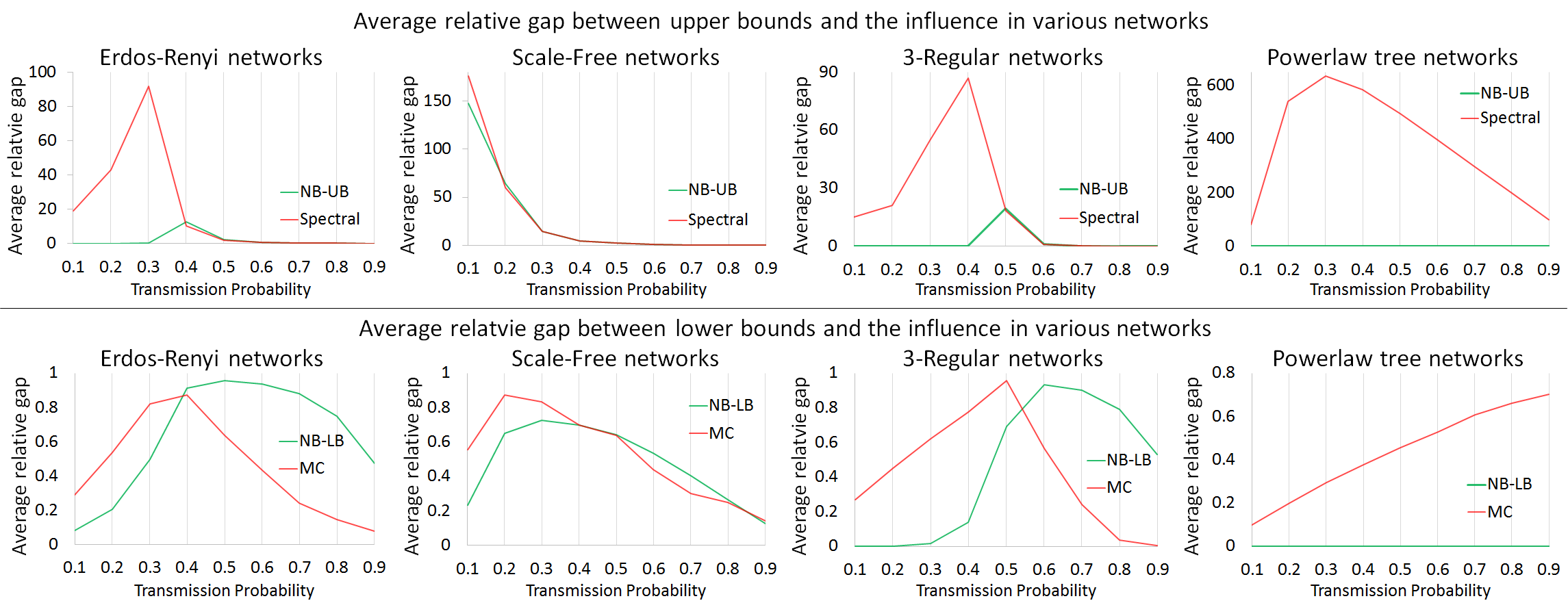}
\caption{This figure compares the average relative gap of the bounds: \nbub, the spectral upper bound in~\cite{lemonnier2014tight}, \nblb, and the probabilistic lower bound computed by MC simulations, for various types of networks.}
\label{fig:networks}
\end{figure*}

The probabilistic lower bound is chosen for the experiments, since there has not been any tight lower bound. The sample size of $10$ is determined to overly match the computational complexity of \nblb\ algorithm.  
In Figure~\ref{fig:networks}, we compare the average relative gap of the bounds for every network model and for each transmission probability, where the true value is assumed to be $\sigma_{mc}$. For example, the average relative gap of \nbub\ for $100$ Erdos Renyi networks $\{{\cal N}_i\}_{i=1}^{100}$ with the transmission probability $p$ is computed by $\frac{1}{100}\sum_{i\in [100]}\frac{\upinf[{\cal N}_i]-\sigma_{mc}[{\cal N}_i]}{\sigma_{mc}[{\cal N}_i]}$, 
where $\upinf[{\cal N}_i]$ and $\sigma_{mc}[{\cal N}_i]$ denote the \nbub\ and the MC estimation, respectively, for the network ${\cal N}_i$.

$\bf{Results:}$
Figure~\ref{fig:networks} shows that \nbub\ outperforms the upper bound in~\cite{lemonnier2014tight} for the Erdos-Renyi and random $3$-regular networks, and performs comparably for the scale-free networks. Also, \nblb\ gives tighter bounds than the MC bounds on the Erdos-Renyi, scale-free, and random regular networks when the transmission probability is small, $p<0.4$. The \nbub\ and \nblb\ compute the exact influence for the tree networks since both algorithms avoid backtracking walks. 

Next, we show the bounds on exemplary networks.

\subsection{Upper bounds}\label{sec:exp_up}

\begin{figure}
    \centering
    \begin{subfigure}[b]{0.45\textwidth}
        \includegraphics[width=\textwidth]{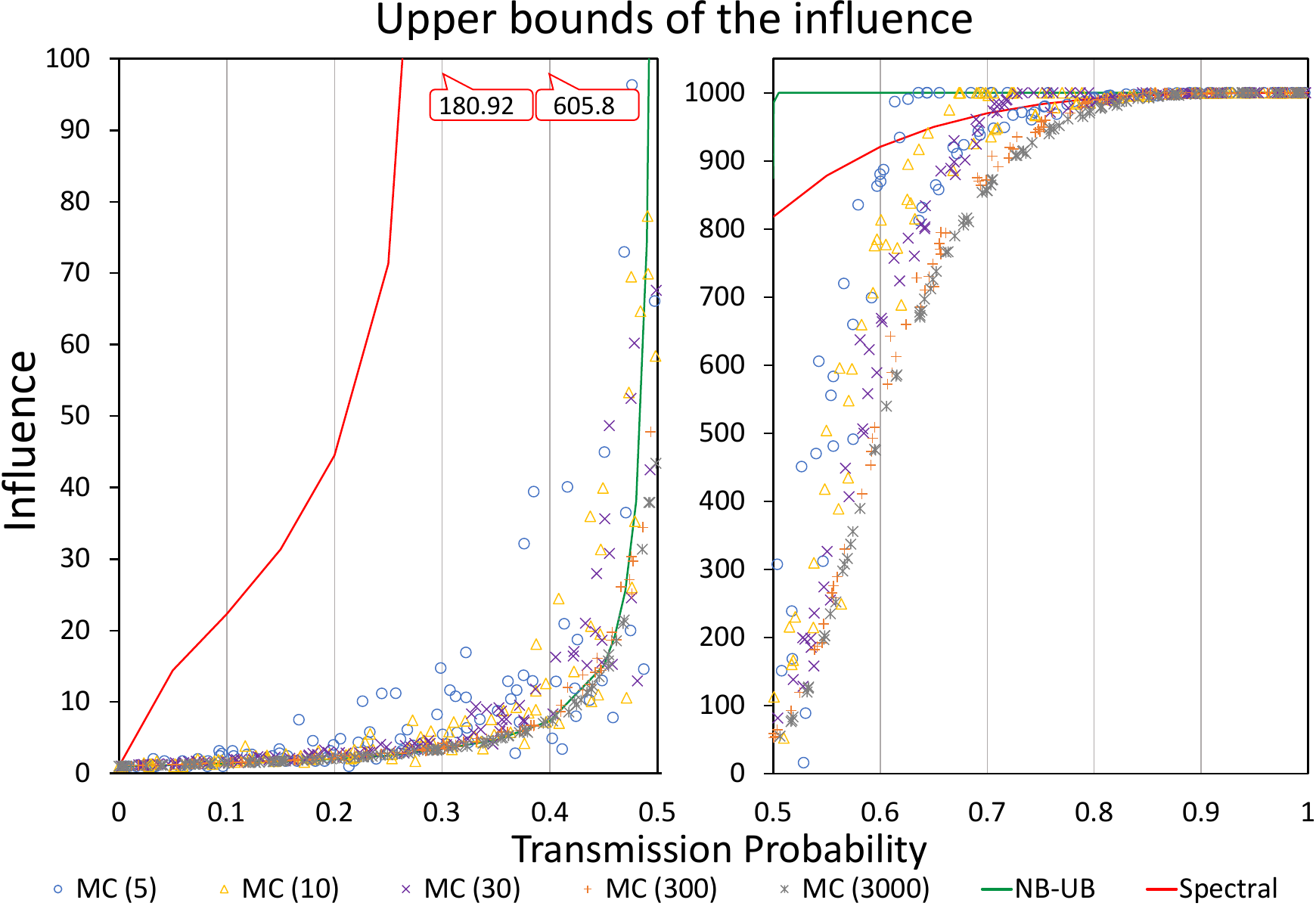}
        \caption{}
        \label{fig:up}
    \end{subfigure}
    ~ 
    \begin{subfigure}[b]{0.45\textwidth}
        \includegraphics[width=\textwidth]{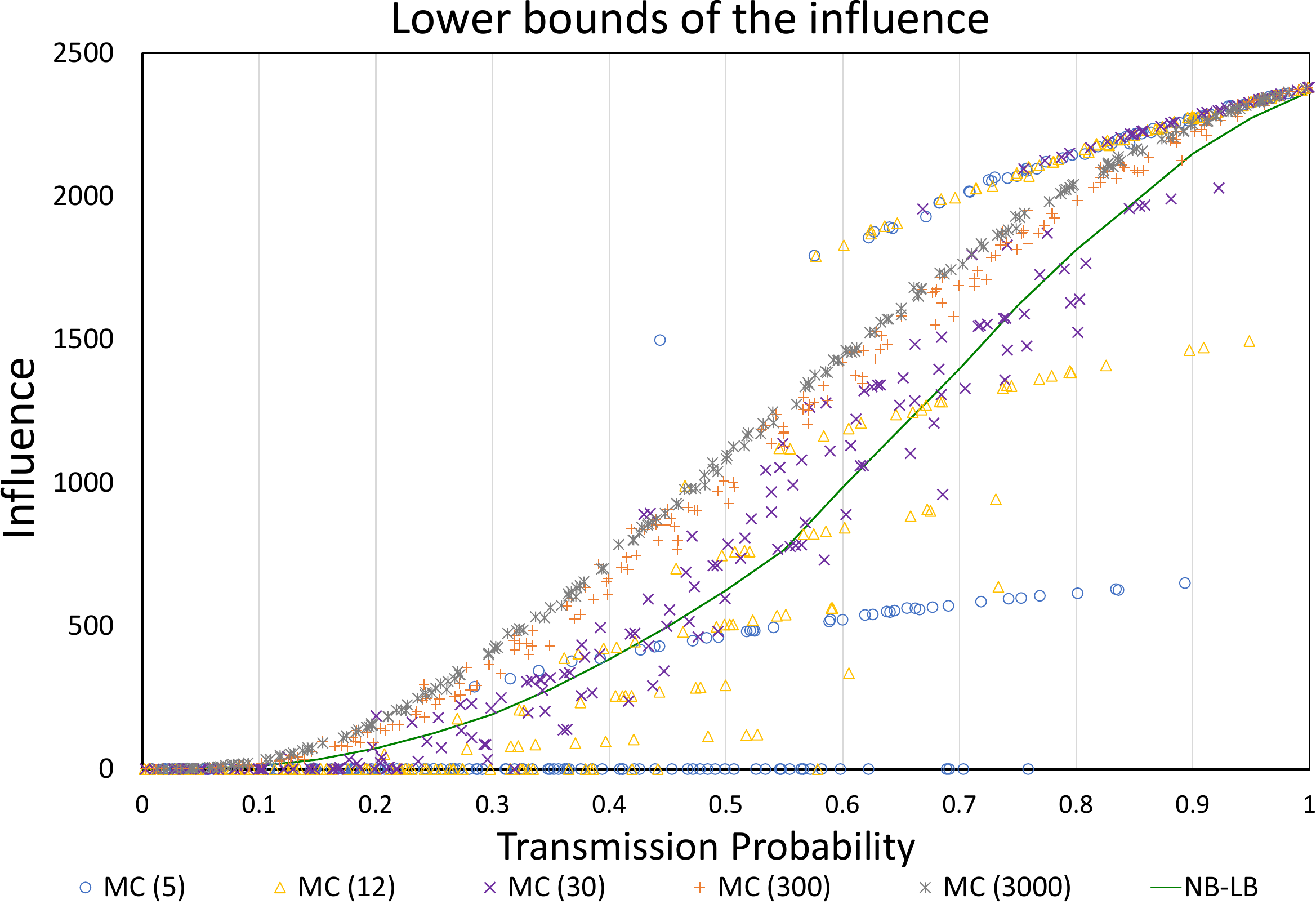}
        \caption{}
        \label{fig:low}
    \end{subfigure}
    \caption{(a) The figure compares various upper bounds on the influence in the $3$-regular network in section~\ref{sec:exp_up}. The MC upper bounds are computed with various simulation sizes and shown with the data points indicated with $\mathrm{MC}(N)$, where $N$ is the number of simulations. The spectral upper bound in~\cite{lemonnier2014tight} is shown in red line, and \nbub\ is shown in green line. 
    \\(b) The figure shows lower bounds on the influence of a scale-free network in section~\ref{sec:exp_low}. The probabilistic lower bounds shown with points are obtained from Monte Carlo simulations with various simulation sizes, and the data points indicated with $\mathrm{MC}(N)$ are obtained by $N$ number of simulations. \nblb\ is shown in green line.}\label{fig:animals}
\end{figure}

$\bf{Selection\ of\ networks:}$
In order to illustrate a typical behavior of the bounds, we have chosen the network in Figure~\ref{fig:up} as follows. 
First, we generate $100$ random $3$-regular graphs $G$ with $1000$ nodes and assign a random seed $s$.
Then, the corresponding IC model is defined as $IC(G,{\cal P}=pA,S_0=\{s\})$, where $A$ is the adjacency matrix, resulting in the same transmission probability $p$ for every edge.
For each network, we compute \nbub\ and MC estimation of $1000$ simulations.
We note that with $1000$ simulations, the estimation given by MC cannot guarantee stability. Yet, given limited time and resources, it is acceptable for the selection process.
Then, we compute the score for each network. The score is defined as the sum of the square differences between the upper bounds and MC estimations over the transmission probability $p\in \{0.1,0.2,\dots,0.9\}$. Finally, a graph whose score is the median from all $100$ scores is chosen for Figure~\ref{fig:up}.

$\bf{Results:}$
In figure~\ref{fig:up}, we compare 1) the upper bounds introduced~\cite{lemonnier2014tight} and 2) the probabilistic upper bounds obtained by Monte Carlo simulations with $99\%$ confidence level, to \nbub. The upper bounds from MC simulations are computed with the various sample sizes $N\in\{5, 10, 30, 300, 3000\}$. It is evident from the figure that a larger sample size provides a tighter probabilistic upper bound.
\nbub\ outperforms the bound by~\cite{lemonnier2014tight} and the probabilistic MC bound when the transmission probability is relatively small. Further, it shows a similar trend as the MC simulations with a large sample size.

\subsection{Lower bounds}\label{sec:exp_low}
$\bf{Selection\ of\ networks:}$
We adopt a similar selection process as in the selection for upper bounds, but with the scale free networks, with $3000$ nodes and $\alpha=2.5$. 

$\bf{Results:}$
We compare probabilistic lower bounds obtained by MC with $99\%$ confidence level to \nblb. The lower bounds from Monte Carlo simulations are computed with various sample sizes $N\in\{5, 12, 30, 300, 3000\}$, which accounts for a constant, $\log(n)$, $0.01n$, $0.1n$, and $n=|V|$.
\nblb\ outperforms the probabilistic bounds by MC with small sample sizes. 
Recall that the computational complexity of the lower bound in algorithm~\ref{alg:lower} is $O(|V|+|E|)$, which is the computational complexity of a constant number of Monte Carlo simulations.
In figure~\ref{fig:low}, it shows that \nblb\ is tighter than the probabilistic lower bounds with the same computational complexity, and it also agrees with the behavior of the MC simulations.

\subsection{Tunable bounds}
In this section, we present tunable upper and lower bounds on example networks. 

\begin{figure}[H]
    \centering
    \begin{subfigure}[b]{0.45\textwidth}
        \includegraphics[width=\textwidth]{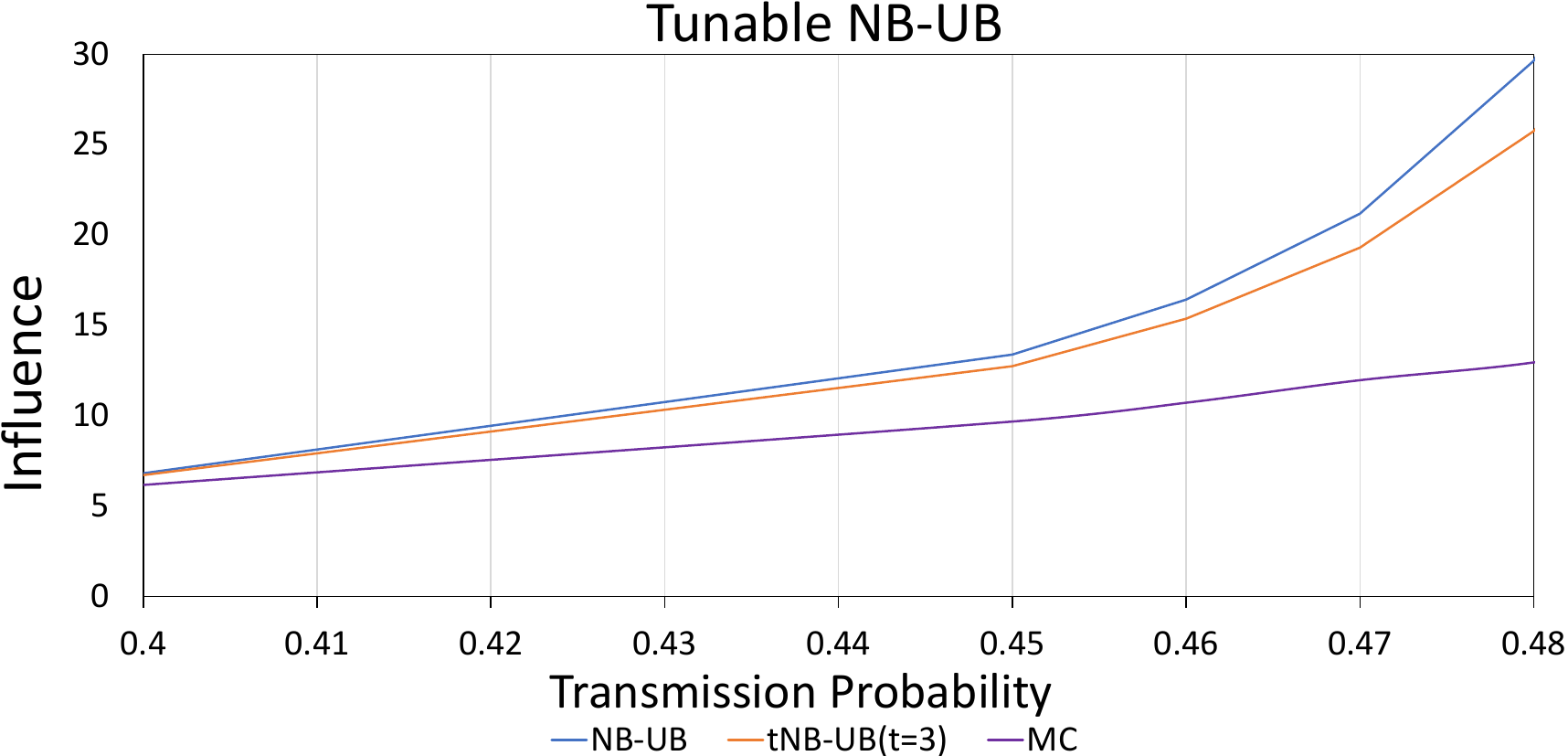}
        \caption{}
        \label{fig:tunaup}
    \end{subfigure}
    ~ 
    \begin{subfigure}[b]{0.45\textwidth}
        \includegraphics[width=\textwidth]{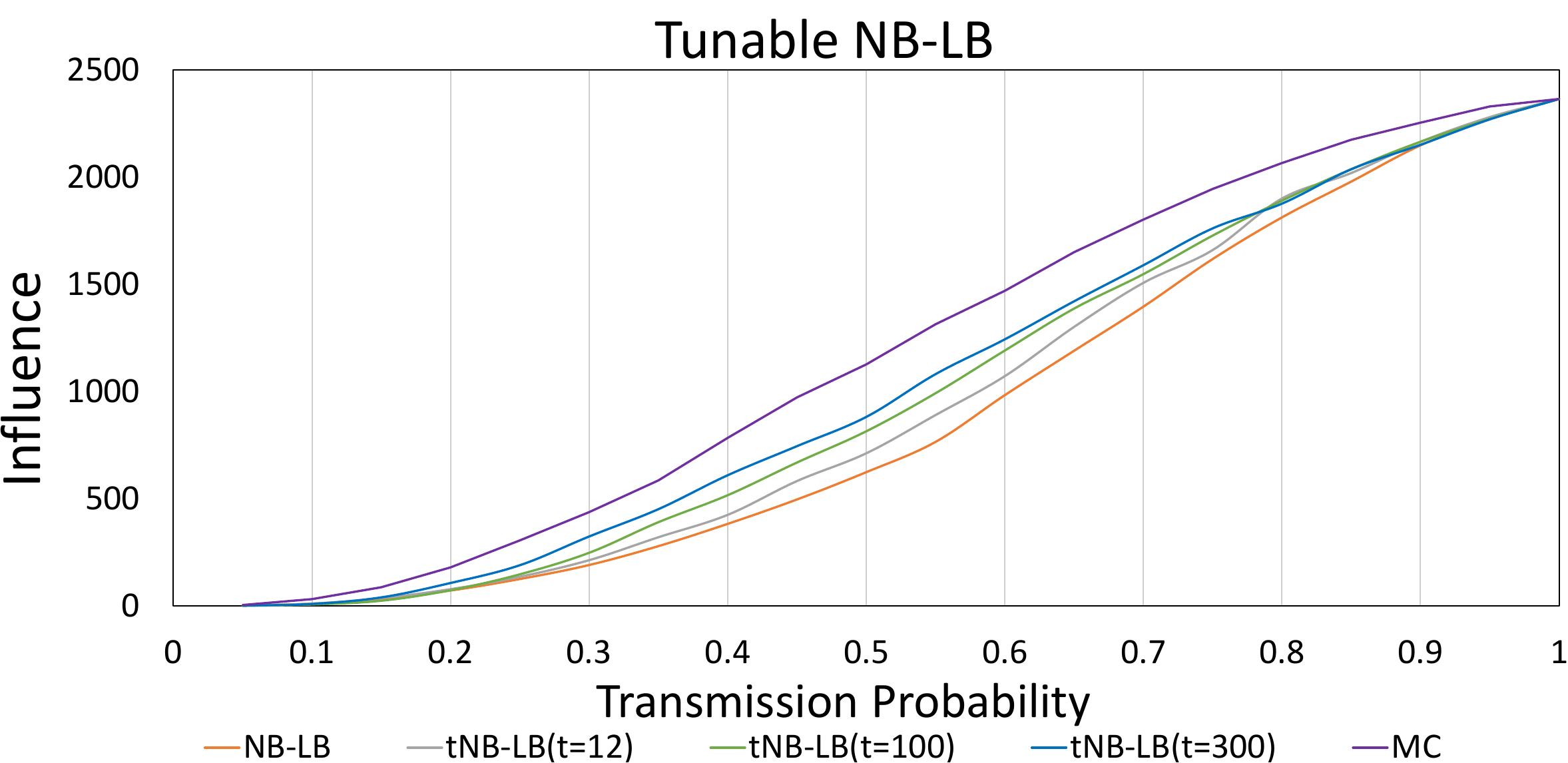}
        \caption{}
        \label{fig:tunalow}
    \end{subfigure}
    \caption{(a) \nbub, t\nbub\ with $t=3$, and MC estimation with $10000$ simulations on a $3$-regular network with $100$ nodes.
    \\(b) \nblb, t\nblb\ with various $t\in\{12,100,300\}$, and MC estimation with $3000000$ simulations on a scale-free network with $3000$ nodes.}\label{fig:tunafig}
\end{figure} 

In Figure~\ref{fig:tunaup}, we show t\nbub\ on a sample network. We consider a $3$-regular network with $100$ nodes and a single seed. Since the \nbub\ gives a tight bound on $p<0.4$, we plot t\nbub\ on $p\in (0.4,0.5)$ where it shows some improvements with small $t$.

In Figure~\ref{fig:tunalow}, we present t\nblb\ on a scale-free network with $3000$ nodes, $\alpha=2.5$, and a single seed. We compare t\nblb\ with various choices of $t\in\{1,12,100,300\}$, and t\nblb\ approaches the MC estimation as $t$ grows.

\section{Conclusion}
In this paper, we propose both upper and lower bounds on the influence in the independent cascade models, and provide algorithms to efficiently compute the bounds. We extend the results by proposing tunable bounds which can adjust the trade off between the efficiency and the accuracy. Finally, the tightness and the performance of bounds are shown with experimental results. One can further improve the bounds considering $r$-nonbacktracking walks, i.e. avoiding cycles of length $r$ rather than just backtracks, and we leave this for future study.

\section{Acknowledgements}
We would like to thank Colin Sandon for helpful discussions. This research was partly supported by the NSF CAREER Award CCF-1552131 and the ARO grant W911NF-16-1-0051.

\pagebreak
\bibliographystyle{abbrv}
\bibliography{main}

\end{document}